\pdfoutput=1
\documentclass[aps,prl,reprint,superscriptaddress,longbibliography,nobibnotes]{revtex4-2}
\usepackage{amsmath,amssymb,bm,mathtools}
\usepackage{amsthm}
\usepackage{mathrsfs}
\usepackage{graphicx}
\usepackage{placeins}
\usepackage{booktabs}
\usepackage{microtype}
\usepackage{hyperref}
\hypersetup{
  colorlinks=true,
  allcolors=blue
}
\theoremstyle{definition}
\newtheorem{proposition}{Proposition}
\newtheorem{lemma}{Lemma}
\newtheorem{corollary}{Corollary}
\newcommand{\dd}{\mathrm{d}}
\newcommand{\Ree}{\operatorname{Re}}
\newcommand{\sgnfun}{\operatorname{sgn}}
\newcommand{\Rrat}{\mathcal R}
\newcommand{\Pclass}{\mathfrak P}
\newcommand{\Ephys}{\mathcal E}
\newcommand{\Nmin}{N_{\min}}
\newcommand{\Wclass}{\mathfrak W}

\newcommand{\paperTitle}{Coupling Does Not Reduce the Auxiliary-Mode Count for \texorpdfstring{$1/|\omega|$}{1/|omega|} Spectra in Passive Lindblad Networks}

\begin{document}

\title{\paperTitle}

\author{Qing-Ao Xiang}
\affiliation{Key Laboratory of Opto-electronic Control and Detection Technology of University of Hunan Province, and College of Physics and Electronic Engineering, Hengyang Normal University, Hengyang 421002, China}

\author{Yan Liu}
\affiliation{Key Laboratory of Opto-electronic Control and Detection Technology of University of Hunan Province, and College of Physics and Electronic Engineering, Hengyang Normal University, Hengyang 421002, China}

\author{Xin-Yuan Yang}
\affiliation{Key Laboratory of Opto-electronic Control and Detection Technology of University of Hunan Province, and College of Physics and Electronic Engineering, Hengyang Normal University, Hengyang 421002, China}

\author{Ya-Ju Song}
\email{yjsong@hynu.edu.cn}
\affiliation{Key Laboratory of Opto-electronic Control and Detection Technology of University of Hunan Province, and College of Physics and Electronic Engineering, Hengyang Normal University, Hengyang 421002, China}
\affiliation{Key Laboratory of Low Dimensional Quantum Structures and Quantum Control of Ministry of Education, Hunan Normal University, Changsha 410081, China}

\begin{abstract}
Representing continuous environments by finitely many Markovian auxiliary modes is fundamental in non-Markovian open quantum systems, yet a critical question remains: at a fixed mode budget, can coherent intermode coupling reduce the spectral approximation error? We prove that intermode coupling offers no advantage when passive, number-conserving Gaussian Lindblad auxiliary networks approximate a $1/|\omega|$ spectrum over a finite two-sided frequency band. For any mode budget $N$, the general coupled class and its uncoupled diagonal subclass share the same optimal error, which is exactly the degree-$2N$ Zolotarev error for sign approximation. This optimum is attainable by $N$ independent damped auxiliary modes at zero detuning. The result holds when the auxiliary network is in a stationary vacuum state, the system couples to it via a single Hermitian bath operator, and no white-noise feedthrough term is present. Consequently, although a general coupled network has $O(N^{2})$ real parameters, coherent intermode coupling, collective dissipation, and nonnormal structure cannot reduce the number of auxiliary modes required to reach a prescribed tolerance. This exact relation yields both the minimum mode count for a prescribed positive-frequency dynamic range and tolerance, and the maximum dynamic range attainable for a prescribed mode budget and tolerance.
\end{abstract}

\maketitle

Representing continuous environments by finitely many Markovian auxiliary modes is common in non-Markovian open quantum systems, quantum impurity problems, and noise modeling; related embeddings are also used in generalized Langevin dynamics~\cite{Baczewski2013}. At fixed mode budget, two questions arise: what is the minimum error compatible with the finite-mode structure, and can this optimum be attained by completely positive and trace-preserving (CPTP) Lindblad dynamics~\cite{Gorini1976,Lindblad1976}? Pseudomode and related finite-mode methods commonly use independent auxiliary modes or sums of exponentials~\cite{Garraway1997,Mazzola2009,Tamascelli2018,Pleasance2020,Trivedi2021}. Coupled Lindblad constructions add coherent intermode coupling and collective dissipation while keeping the joint dynamics CPTP~\cite{Mascherpa2020,Dorda2014,Dorda2015,Lednev2024,Huang2026}. With $N$ modes, a general coupled network has $O(N^2)$ real parameters and can produce interference, dispersive structures, and, through Jordan blocks of a nondiagonalizable drift matrix, higher-order poles~\cite{Alford2026}.

Huang \emph{et al.} connected coupled Lindblad and quasi-Lindblad pseudomode representations and developed robust constructions avoiding nonconvex optimization~\cite{Huang2026,Park2024}. Together with related results, these works show that the mode count needed to fit a bath correlation function over a dimensionless time window $T>0$ to tolerance $\varepsilon_{\rm corr}>0$ can scale as $\operatorname{polylog}(T/\varepsilon_{\rm corr})$~\cite{Thoenniss2025,VilkoviskiyAbanin2024,HuangDecomp2026}. These works primarily address how many modes suffice; we ask whether, at fixed mode budget, intermode coupling reduces the maximum relative spectral error with respect to $1/|\omega|$ over the entire target band.

Low-frequency $1/f$-type noise is an important source of decoherence in solid-state quantum devices and can be characterized over broad frequency ranges by dynamical-decoupling noise spectroscopy~\cite{Yoshihara2006,Bylander2011}. Its multiple frequency scales make it a natural benchmark for finite-mode environment representations. Sums of Lorentzians with different relaxation scales can synthesize finite-band $1/f$-type spectra~\cite{Machlup1954,SauveSzabo1985,Pellegrini1986,Milotti1995}, while Zolotarev theory supplies the relevant minimax rational approximants~\cite{Akhiezer1992,IstaceThiran1995,Gawlik2019,TrefethenWilber2025}. Using the rational-degree bound imposed by the mode count, we derive a Zolotarev lower bound for the general coupled class and show that it is attained within the uncoupled diagonal subclass by $N$ independent damped auxiliary modes at zero detuning. Hence the general coupled class and its uncoupled diagonal subclass have the same optimal error at fixed mode budget. This exact relation yields both the minimum mode count for a prescribed positive-frequency dynamic range and tolerance and the maximum positive-frequency dynamic range for a prescribed mode budget and tolerance; see Fig.~\ref{fig:main}.

\begin{figure*}[t]
\centering
\includegraphics[width=0.97\textwidth]{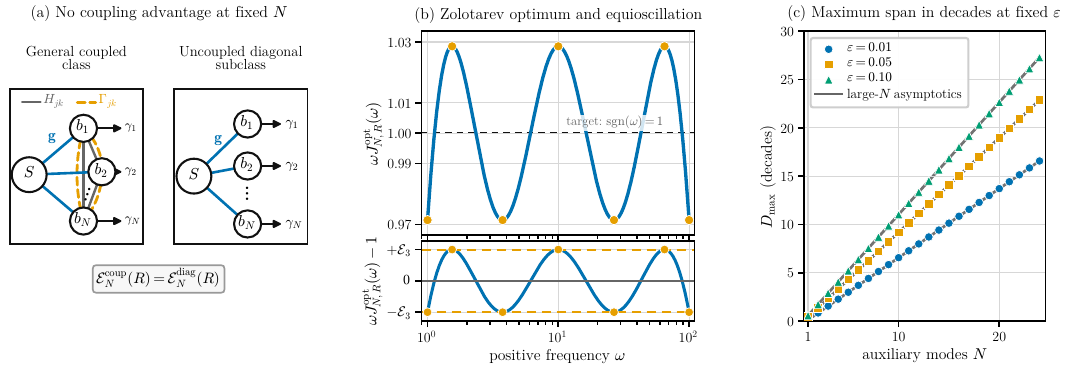}
\caption{
\textbf{No advantage from coupling at fixed mode budget and the resulting resource relations.}
(a) General coupled class versus uncoupled diagonal subclass; $S$ denotes the system. Blue lines $\bm g$ couple the system to the auxiliary modes, gray lines $H_{jk}$ denote coherent intermode coupling, orange dashed lines $\Gamma_{jk}$ denote cross damping from collective dissipation, and black arrows denote individual damping rates $\gamma_j$, where $j,k=1,\ldots,N$.
(b) For $R=10^2$ and $N=3$, $\omega J_{N,R}^{\rm opt}(\omega)$ equioscillates about $\sgnfun(\omega)=1$ on the positive-frequency band. Orange dots mark alternation points at which the residual $\omega J_{N,R}^{\rm opt}(\omega)-1$ alternately attains $\pm\Ephys_3(10^2)$, where $\Ephys_3(10^2)=0.02856$, certifying minimax optimality.
(c) Maximum log-frequency span $D_{\max}(N,\varepsilon)=\log_{10}R_{\max}(N,\varepsilon)$ at fixed tolerance, in decades. Colored markers show exact inversion results; gray lines show the large-$N$ asymptotic forms.
}
\label{fig:main}
\end{figure*}

\paragraph{Model, spectrum, and optimization problem.---}
To compare the general coupled class and the uncoupled diagonal subclass at the same mode budget, we first nondimensionalize the target spectrum and target band. Suppose that the physical target spectrum is
$J_{\rm tar}^{\rm phys}(\nu)=\mathcal A_0/|\nu|$,
where $\mathcal A_0>0$, and that the target band satisfies
$0<\nu_{\min}<\nu_{\max}$ and
$\nu_{\min}\le|\nu|\le\nu_{\max}$.
Set $\omega=\nu/\nu_{\min}$ and measure spectral amplitudes in units of
$\mathcal A_0/\nu_{\min}$. The normalized problem then depends only on the dimensionless positive-frequency dynamic range
$R=\nu_{\max}/\nu_{\min}$:
\begin{equation}
J_{\rm tar}(\omega)=\frac{1}{|\omega|},
\ \ 
\Omega_R=[-R,-1]\cup[1,R],
\ \ R>1 .
\label{eq:target}
\end{equation}
Here $R$ is the positive-frequency dynamic range. We exclude $(-1,1)$ to avoid the zero-frequency singularity. The~optimization is over $\Omega_R$; no condition is imposed outside the band. In a quantum realization, $\omega$ may be interpreted as a detuning in the interaction picture. The even target $1/|\omega|$ can describe the two-sided power spectral density of stationary real-valued classical noise; in the quantum setting, it serves as a scalar spectral benchmark spanning multiple scales in the interaction picture. The spectrum of a generic equilibrium quantum bath obeys detailed balance, and its positive- and negative-frequency branches are generally asymmetric and must be treated separately.

Let $n$ denote the actual number of modes in a given network and $N$ the allowed mode budget. Collect the annihilation operators of the $n$ auxiliary modes into the vector
$\hat{\bm b}=(\hat b_1,\ldots,\hat b_n)^T$.
We consider passive, number-conserving Gaussian auxiliary networks. Here passive means that the Heisenberg-picture linear drift dynamics closes on $\hat{\bm b}$: the linear drift term in
$\dot{\hat{\bm b}}$ contains $\hat{\bm b}$ but not
$\hat{\bm b}^{\dagger}$, so no Bogoliubov mixing occurs.
The matrices $H=(H_{jk})_{j,k=1}^{n}=H^\dagger$ and
$\Gamma=(\Gamma_{jk})_{j,k=1}^{n}
=\Gamma^\dagger\succeq0$
denote, respectively, the number-conserving single-particle Hamiltonian matrix and the damping matrix. For
$j\neq k$, the off-diagonal entry $H_{jk}$ represents coherent coupling between modes $j$ and $k$, whereas $\Gamma_{jk}$ represents cross damping induced by collective dissipation
~\cite{Huang2026,Alford2026}.

The system couples to the auxiliary network through a single Hermitian bath operator:
\begin{equation}
\hat H_{SA}=\hat S\otimes\hat B,
\qquad
\hat B=\bm g^\dagger\hat{\bm b}
+\hat{\bm b}^\dagger\bm g ,
\end{equation}
where $\hat S=\hat S^\dagger$ is a system operator and $\bm g$ is the coupling vector between the system and the auxiliary modes. Consequently, the spectrum seen by the system is scalar. This restriction applies only to the interface through which the system couples to the network and does not restrict the internal Markovian dissipation of the auxiliary network, which remains described by an arbitrary $\Gamma\succeq0$.
With our convention for damping rates, the homogeneous evolution of the auxiliary-mode first moments obeys $\dd\langle\hat{\bm b}\rangle/\dd t=-(\Gamma+iH)\langle\hat{\bm b}\rangle$. We therefore define the drift matrix $M=\Gamma+iH$, and let $I_n$ denote the $n\times n$ identity matrix. Under the above nondimensionalization, time is measured in units of $\nu_{\min}^{-1}$, $H$, $\Gamma$, and $M$ in units of $\nu_{\min}$, and $\bm g$ in units of $\sqrt{\mathcal A_0}$. We allow $M$ to be nonnormal, i.e.,
$MM^\dagger\neq M^\dagger M$. We require every eigenvalue of $M$ to have a positive real part. This stability condition ensures that $e^{-Mt}$ decays with time and excludes poles on the real-frequency axis.

\FloatBarrier
For the zero-mean Gaussian auxiliary networks considered here, under linear system--bath coupling, the bath statistics seen by the system are completely determined by the correlation function of the bath operator $\hat B$~\cite{Tamascelli2018}. Upon expanding $\hat B$, all operator averages vanish in a stationary vacuum state except the annihilation--creation correlator
$\langle\hat{\bm b}(t)\hat{\bm b}^{\dagger}(0)\rangle_0$.
The quantum regression theorem states that, for $t\ge0$, this correlator follows the same homogeneous evolution $e^{-Mt}$ as the single-time expectation values~\cite{Lax1963}. Stationarity and $\hat B=\hat B^\dagger$ then give
\begin{equation}
\begin{aligned}
C(t):&=\langle\hat B(t)\hat B(0)\rangle_0
=\bm g^\dagger e^{-Mt}\bm g,\qquad t\ge0,\\
C(-t)&=C(t)^*,\quad t\ge0,\quad
J(\omega):=\int_{-\infty}^{\infty}C(t)e^{i\omega t}\dd t,\\
J(\omega)&=2\Ree\!\left[\bm g^\dagger(M-i\omega I_n)^{-1}\bm g\right]\ge0 .
\end{aligned}
\label{eq:model-spectrum}
\end{equation}
Proposition~\ref{prop:positivity} in the Supplemental Material proves that this spectrum is nonnegative and has the high-frequency decay
$J(\omega)=O(\omega^{-2})$. To convert the actual mode count $n$ into a rational-degree constraint on the spectrum seen by the system, define
\begin{equation}
\begin{aligned}
h(\omega):=\bm g^\dagger(M-i\omega I_n)^{-1}\bm g
=\frac{u(\omega)}{v(\omega)},\\
u,v\in\mathbb C[\omega],\ \ 
\deg u\le n-1,\ \ \deg v=n .
\end{aligned}
\end{equation}
The rational function $h(\omega)=u(\omega)/v(\omega)$ follows from the matrix inversion formula and does not require $M$ to be diagonalizable. Combining Eq.~\eqref{eq:model-spectrum} with
$J(\omega)=O(\omega^{-2})$ shows that, after cancellation, the numerator and denominator degrees of $J(\omega)$ are at most $2n-2$ and $2n$, respectively. Thus, $J(\omega)$ is of type at most
$(2n-2,2n)$, and $\omega J(\omega)$ is of type at most
$(2n-1,2n)$. We refer to these bounds on the numerator and denominator degrees, imposed by the mode count $n$, as the \emph{rational-degree bound}.
Coherent intermode coupling, collective dissipation, and nonnormal structure can alter the detailed pole structure and spectral coefficients but cannot violate this rational-degree bound. A proof is given in Lemma~\ref{lem:degree} of the Supplemental Material.

Let $\Pclass_{\le N}^{\rm coup}$ denote the set of all triples
$(H,\Gamma,\bm g)$ with actual mode count
$n\le N$ that satisfy the above conditions; we call it the \emph{general coupled class}.
Its \emph{uncoupled diagonal subclass}
$\Pclass_{\le N}^{\rm diag}$ further requires
$H=\operatorname{diag}(\Omega_1,\ldots,\Omega_n)$ and
$\Gamma=\operatorname{diag}(\gamma_1,\ldots,\gamma_n)$, where $\Omega_j$ and $\gamma_j>0$ are, respectively, the center frequency of auxiliary mode $j$ (its center detuning in the chosen interaction picture) and its damping rate. Substitution of these diagonal forms into
Eq.~\eqref{eq:model-spectrum} gives
\begin{equation}
J_{\rm diag}(\omega)=
2\sum_{j=1}^{n}\frac{|g_j|^2\gamma_j}
{(\omega-\Omega_j)^2+\gamma_j^2},
\end{equation}
so the uncoupled diagonal subclass produces a sum of Lorentzians with nonnegative weights.
By contrast, the general coupled class allows $H$ and $\Gamma$ to have off-diagonal entries and also allows the drift matrix $M$ to be nondiagonalizable. Figure~\ref{fig:main}(a) summarizes the structural distinction between the two model classes. Both classes allow $\bm g=0$ and hence contain the zero spectrum, but neither includes a frequency-independent white-noise feedthrough term. Here a feedthrough term means an additional constant spectral term
$d>0$, i.e.,
$J(\omega)\mapsto d+J(\omega)$.

Once the actual mode count $n$ and the parameters $H$, $\Gamma$, and $\bm g$ are fixed, they specify a particular network and hence its model spectrum $J(\omega)$. Since
$J_{\rm tar}(\omega)=1/|\omega|$, the relative error of this network with respect to the target spectrum at frequency
$\omega$ is
\begin{equation}
\left|\frac{J(\omega)}{J_{\rm tar}(\omega)}-1\right|
=
|1-|\omega|J(\omega)|.
\end{equation}
We call the maximum of this pointwise error across the entire target band the \emph{maximum relative error over the target band}. Taking its infimum over the corresponding model class gives the optimal error of that class at a prescribed mode budget. Thus,
\begin{equation}
\begin{aligned}
\Ephys_N^{\rm coup}(R)
&=\inf_{(H,\Gamma,\bm g)\in\Pclass_{\le N}^{\rm coup}}
\sup_{\omega\in\Omega_R}|1-|\omega|J(\omega)|,\\
\Ephys_N^{\rm diag}(R)
&=\inf_{(H,\Gamma,\bm g)\in\Pclass_{\le N}^{\rm diag}}
\sup_{\omega\in\Omega_R}|1-|\omega|J(\omega)|,
\end{aligned}
\label{eq:physical-error}
\end{equation}
with $\Ephys_0^{\rm coup}(R)=\Ephys_0^{\rm diag}(R)=1$.
Proposition~\ref{prop:positive-weight-duality} in the Supplemental Material shows that, for any given network, its maximum relative error over the target band equals the supremum, over all continuous nonnegative frequency weights on
$\Omega_R$ that are not identically zero, of the relative deviation of the corresponding weighted spectral quantity. Filter functions in quantum decoherence and noise spectroscopy are physical examples of such weights
~\cite{Cywinski2008,AlvarezSuter2011,Cerfontaine2021}.

\paragraph{From the physical spectrum to the fourth Zolotarev problem.---}
To obtain a lower bound valid for the entire general coupled class, we do not optimize the network parameters
$H,\Gamma,\bm g$ directly. Instead, we use the rational-degree bound above to relax the physical optimization problem by enlarging the admissible set to the corresponding class of real rational functions. For any
$I\subset\mathbb R$ and nonnegative integers $a,b$, let
$\Rrat_{a,b}^{\mathbb R}(I)$ denote the class of real rational functions with no poles on $I$. After cancellation, the numerator and denominator degrees of each function in this class are at most $a$ and $b$, respectively. Any network with actual mode count $n$ generates a spectrum for which
\begin{equation}
\begin{aligned}
q (\omega) := \omega J (\omega) \in \mathcal{R} _ {2 n - 1, 2 n} ^ {\mathbb{R}} \left(\Omega_ {R}\right),\\ 
| 1 - | \omega | J (\omega) | = | \operatorname{sgn} (\omega) - q (\omega) |.
\end{aligned}
\label{eq:sign-lift}
\end{equation}
Equation~\eqref{eq:sign-lift} shows that
$q(\omega)\approx+1$ is required on the positive-frequency band and
$q(\omega)\approx-1$ on the negative-frequency band. The relative spectral approximation problem on the original two-sided band is therefore equivalent to sign-function approximation on two disjoint intervals, with the maximum relative error over the target band unchanged.

To adopt the standard notation for the classical Zolotarev problem, set
$\xi=\omega/R$ and $k=R^{-1}$. Under this scaling, the physical band
$\Omega_R$ maps to
$I_k=[-1,-k]\cup[k,1]$, while the rational degrees and error remain unchanged:
\begin{equation}
\sup_{\omega\in\Omega_R}
|\sgnfun(\omega)-q(\omega)|
=
\sup_{\xi\in I_k}
|\sgnfun(\xi)-q(R\xi)|.
\end{equation}

The fourth Zolotarev problem asks for a real rational function of prescribed type that approximates the sign function on these two disjoint intervals while minimizing the maximum error over their union~\cite{Akhiezer1992,IstaceThiran1995}. For a nonnegative integer $m$, its optimal error is
\begin{equation}
e_m(k)=
\inf_{s\in\Rrat_{m,m}^{\mathbb R}(I_k)}
\sup_{\xi\in I_k}|\sgnfun(\xi)-s(\xi)| .
\label{eq:zolo-error}
\end{equation}
We call a rational function that attains this error an \emph{optimal rational approximant}.

To fold both frequency bands onto one positive interval, we use an odd optimal rational approximant. Oddness is not imposed \emph{a priori} on the physical network: $J(\omega)$ need not be even, so $q(\omega)=\omega J(\omega)$ need not be odd. Nevertheless, the optimal error over the full type-$(2N,2N)$ class can be attained by an odd function that, after cancellation, has type $(2N-1,2N)$ and satisfies the rational-degree bound~\cite{Gawlik2019}. Factoring out its argument leaves a rational function of the squared argument. Scaling back to the physical band gives
\begin{equation}
q_{N,R}^{\rm opt}(\omega)
=\omega r_{N,R}^{\rm opt}(\omega^2),\ \ 
r_{N,R}^{\rm opt}
\in\Rrat_{N-1,N}^{\mathbb R}([1,R^2]).
\label{eq:sign-inverse-factorization}
\end{equation}
Here $q_{N,R}^{\rm opt}(\omega)$ is the rational function that approximates
$\sgnfun(\omega)$ on the two-sided band $\Omega_R$ and minimizes the maximum error over that band, whereas
$r_{N,R}^{\rm opt}(x)$ is the real rational function that approximates
$x^{-1/2}$ on the positive interval $[1,R^2]$ and minimizes the maximum relative error over that interval.
They are related by Eq.~\eqref{eq:sign-inverse-factorization}. The subscripts $N$ and $R$ denote the mode budget and positive-frequency dynamic range, respectively, and the superscript
${\rm opt}$ denotes optimality.

Setting $x=\omega^2$ folds both frequency bands onto
$[1,R^2]$, and $|\sgnfun(\omega)-q_{N,R}^{\rm opt}(\omega)|=|1-\sqrt{x}\,r_{N,R}^{\rm opt}(x)|$. Thus, sign-function approximation on the two-sided band is equivalent to minimax approximation of
$x^{-1/2}$ in relative error on the positive interval, and the two problems have the same optimal error.

The Zolotarev problem therefore supplies a lower bound for the general coupled class. If
$r_{N,R}^{\rm opt}(x)$ has negative real poles and positive residues, this bound is attainable within the uncoupled diagonal subclass.

\textbf{Theorem: No advantage from coupling at fixed mode budget.}
\label{thm:main}
For any $R>1$ and integer $N\ge1$,
\begin{equation}
\boxed{
\begin{aligned}
\Ephys_N^{\rm coup}(R)
&=\Ephys_N^{\rm diag}(R)
=e_{2N}(R^{-1})\\
&=\min_{r\in\Rrat_{N-1,N}^{\mathbb R}([1,R^2])}
\max_{x\in[1,R^2]}
|1-\sqrt{x}\,r(x)| .
\end{aligned}}
\label{eq:main-theorem}
\end{equation}
An optimal rational approximant $r_{N,R}^{\rm opt}(x)$ can be chosen in the following partial-fraction form. Returning to the frequency variable
$\omega$ then defines the corresponding spectrum
$J_{N,R}^{\rm opt}(\omega)$:
\begin{equation}
\begin{aligned}
r_{N,R}^{\rm opt}(x)
&=\sum_{j=1}^{N}\frac{a_j}{x+\gamma_j^2},
\ \  a_j,\gamma_j>0,\\
J_{N,R}^{\rm opt}(\omega)
&:=r_{N,R}^{\rm opt}(\omega^2)
=\sum_{j=1}^{N}
\frac{a_j}{\omega^2+\gamma_j^2}
=2\sum_{j=1}^{N}
\frac{|g_j|^2\gamma_j}
{\omega^2+\gamma_j^2}.
\end{aligned}
\label{eq:positive-stieltjes}
\end{equation}
This spectrum is realized by $N$ independent damped auxiliary modes
at zero detuning:
\begin{equation}
H_{\rm opt}=0,\ \ 
\Gamma_{\rm opt}=\operatorname{diag}(\gamma_1,\ldots,\gamma_N),\ \ 
|g_j|^2=\frac{a_j}{2\gamma_j}.
\label{eq:optimal-physical-realization-main}
\end{equation}

Equation~\eqref{eq:positive-stieltjes} also gives the correspondence between the mathematical approximation parameters and the physical network parameters. In the $x$ variable,
$-\gamma_j^2$ is a negative real pole of $r_{N,R}^{\rm opt}(x)$ and
$a_j>0$ is the corresponding partial-fraction residue. Returning to the frequency variable $\omega$,
$\gamma_j>0$ becomes the damping rate of auxiliary mode $j$, while
$|g_j|^2=a_j/(2\gamma_j)$
is the squared coupling amplitude between the system and that mode. Thus, each partial-fraction term in
$r_{N,R}^{\rm opt}(x)$ corresponds to a physically admissible passive auxiliary mode with damping rate $\gamma_j$ and zero detuning. Figure~\ref{fig:main}(b) shows the equioscillation structure of
$\omega J_{N,R}^{\rm opt}(\omega)$.

\paragraph{Proof outline.---}

\emph{(i) Global lower bound.---}
By Eq.~\eqref{eq:sign-lift} and the scaling above, the function
$q(R\xi)$ corresponding to any network in
$\Pclass_{\le N}^{\rm coup}$ belongs to
$\Rrat_{2N-1,2N}^{\mathbb R}(I_{R^{-1}})$, which is contained in the full class
$\Rrat_{2N,2N}^{\mathbb R}(I_{R^{-1}})$ used in the fourth Zolotarev problem. Enlarging the admissible class can only lower the minimax error, so
$\Ephys_N^{\rm coup}(R)\ge e_{2N}(R^{-1})$.

\emph{(ii) Physical realization.---}
The optimal error for the degree-$2N$ instance of the fourth Zolotarev problem is attained by an odd rational function that, after cancellation, has type
$(2N-1,2N)$. The pole--zero interlacing established in Sec.~\ref{sec:SM-realization} of the Supplemental Material, together with the residue-sign calculation in Eq.~\eqref{eq:SM-residues}, shows that all partial-fraction residues in Eq.~\eqref{eq:positive-stieltjes} are positive. Hence
$J_{N,R}^{\rm opt}(\omega)$ can be constructed within the uncoupled diagonal subclass, and therefore
$\Ephys_N^{\rm diag}(R)\le e_{2N}(R^{-1})$.

\emph{(iii) Inclusion of model classes.---}
Because
$\Pclass_{\le N}^{\rm diag}\subseteq\Pclass_{\le N}^{\rm coup}$,
taking the infimum over the larger model class cannot increase the optimal error, so
$\Ephys_N^{\rm coup}(R)\le\Ephys_N^{\rm diag}(R)$.
Combining the three steps gives
\begin{equation}
e_{2N}(R^{-1})
\le\Ephys_N^{\rm coup}(R)
\le\Ephys_N^{\rm diag}(R)
\le e_{2N}(R^{-1}),
\label{eq:inequality-chain}
\end{equation}
so every inequality is saturated. Rigorous proofs of the degree bound, the sign--inverse-square-root isometry, and the positive-residue realization are given in Secs.~\ref{sec:SM-degree}--\ref{sec:SM-realization} of the Supplemental Material.

We henceforth denote the common optimal error of the general coupled class and the uncoupled diagonal subclass by
$\Ephys_N(R)$. Although the uncoupled diagonal subclass allows nonzero center detunings, the realization at zero detuning already attains this optimal error. The simplest case, $N=1$, is $\gamma_1=\sqrt R$, $\Ephys_1(R)=\left(\frac{\sqrt R-1}{\sqrt R+1}\right)^2$.
Here $\gamma_1=\sqrt R$ is the geometric mean of the two band-edge frequencies $1$ and $R$, making the relative errors at the two band edges equal in magnitude. The higher-order Zolotarev construction is a multimode extension of this minimax balancing principle.

\paragraph{From the exact error to mode count and positive-frequency dynamic range.---}
The minimum number of auxiliary modes required at fixed tolerance follows by inverting the main theorem. Define
\begin{equation}
\begin{aligned}
K(k)
&=\int_0^{\pi/2}
\frac{\dd\theta}
{\sqrt{1-k^2\sin^2\theta}},\\
\mu(k)
&=\frac{\pi}{2}
\frac{K(\sqrt{1-k^2})}{K(k)} .
\end{aligned}
\label{eq:mu-main}
\end{equation}
Here $K(k)$ is the complete elliptic integral of the first kind. The function $\mu(k)$ is continuous and strictly decreasing on $(0,1)$. In terms of $\mu$, the relation among positive-frequency dynamic range, rational degree, and optimal error can be written in a form that can be inverted exactly. The classical Zolotarev formula implies that there is a unique
$\zeta_{N,R}\in(0,1)$ satisfying~\cite{Akhiezer1992,IstaceThiran1995,Gawlik2019}
\begin{equation}
2N\mu(\zeta_{N,R})=\mu(R^{-1}),\qquad
\Ephys_N(R)
=\frac{1-\zeta_{N,R}}{1+\zeta_{N,R}}.
\label{eq:zeta-error-main}
\end{equation}
For $\varepsilon\ge0$, define
\begin{equation}
\Nmin(R,\varepsilon):=
\min\!\left\{
N\in\mathbb N_0:
\Ephys_N(R)\le\varepsilon
\right\},
\label{eq:Nmin-def-main}
\end{equation}
where $\mathbb N_0=\{0,1,2,\ldots\}$ and, by convention,
$\Ephys_0(R)=1$. If the set is empty, we set
$\Nmin(R,\varepsilon)=+\infty$.
By definition, $\Nmin(R,\varepsilon)$ is the minimum number of auxiliary modes required to reach tolerance
$\varepsilon$. For
$0<\varepsilon<1$, the condition
$\Ephys_N(R)\le\varepsilon$ is equivalent to
$\zeta_{N,R}\ge(1-\varepsilon)/(1+\varepsilon)$.
Using the strict monotonicity of $\mu$ and taking the smallest integer $N$ that satisfies this condition gives
\begin{equation}
\boxed{
\Nmin(R,\varepsilon)=
\left\lceil
\frac{\mu(R^{-1})}
{2\mu((1-\varepsilon)/(1+\varepsilon))}
\right\rceil .}
\label{eq:Nmin-main}
\end{equation}
Here $\lceil\cdot\rceil$ is the ceiling function. In the joint asymptotic limit
$R\to\infty$ and $\varepsilon\downarrow0$,
\begin{equation}
\Nmin(R,\varepsilon)=
\frac{\ln(4R)\ln(4/\varepsilon)}{\pi^2}
[1+o(1)] .
\label{eq:Nmin-asymptotic}
\end{equation}
Boundary cases and the exact finite-$N$ error parametrization are given in Corollary~\ref{cor:SM-resource} of the Supplemental Material.

The exact error relation can also be inverted to obtain the maximum positive-frequency dynamic range attainable for a prescribed mode budget and tolerance,
$R_{\max}(N,\varepsilon)=\sup\{R>1:\Ephys_N(R)\le\varepsilon\}$,
and the corresponding log-frequency span
$D_{\max}(N,\varepsilon)=\log_{10}R_{\max}(N,\varepsilon)$.
For fixed $0<\varepsilon<1$ and $N\to\infty$,
\begin{equation}
D_{\max}(N,\varepsilon)=
\frac{
2\mu((1-\varepsilon)/(1+\varepsilon))
}{\ln10}N
-\log_{10}4+o(1) .
\label{eq:Dmax-asymptotic}
\end{equation}
For tolerances of $1\%$, $5\%$, and $10\%$, each additional auxiliary mode asymptotically increases the maximum log-frequency span by approximately
$0.72$, $0.98$, and $1.16$ decades, respectively; see Fig.~\ref{fig:main}(c).

\paragraph{Conclusions and implications.---}
The mode count fixes the rational-degree bound on the spectrum seen by the system and hence the accuracy attainable at a prescribed positive-frequency dynamic range. Although a general coupled network has $O(N^2)$ real parameters, coherent intermode coupling, collective dissipation, and nonnormal structure only change pole locations, multiplicities, and spectral coefficients within this bound; they neither lower the optimal error nor reduce the mode count required at a prescribed tolerance. This exact relation provides an optimal benchmark for finite-band $1/f$-type environment representations, pseudomode design, and multiscale noise synthesis throughout the general coupled class. With spectral estimates or simultaneous confidence bands, it also provides a lower bound on the auxiliary-mode count required to describe an experimental noise spectrum within this model class.

\begin{acknowledgments}
This work was supported by the National Natural Science Foundation of China (12205088), the Natural Science Foundation of Hunan Province (2026JJ50350, 2025JJ50005), the Scientific Research Fund of Hunan Provincial Education Department of China (24C0353), the Open Project of Key Laboratory of Low Dimensional Quantum Structures and Quantum Control of Ministry of Education of Hunan Normal University (QSQC2602), and the Open Project of Key Laboratory of Opto-electronic Control and Detection Technology of University of Hunan Province (2024HSKFJJ012).
\end{acknowledgments}

\bibliographystyle{apsrev4-2}
\bibliography{refs}

@article{Gorini1976,
  author  = {Gorini, Vittorio and Kossakowski, Andrzej and Sudarshan, E. C. G.},
  title   = {Completely Positive Dynamical Semigroups of {$N$}-Level Systems},
  journal = {J. Math. Phys.},
  volume  = {17},
  number  = {5},
  pages   = {821--825},
  year    = {1976},
  doi     = {10.1063/1.522979}
}

@article{Lindblad1976,
  author  = {Lindblad, G.},
  title   = {On the Generators of Quantum Dynamical Semigroups},
  journal = {Commun. Math. Phys.},
  volume  = {48},
  number  = {2},
  pages   = {119--130},
  year    = {1976},
  doi     = {10.1007/BF01608499}
}

@article{Garraway1997,
  author  = {Garraway, B. M.},
  title   = {Nonperturbative Decay of an Atomic System in a Cavity},
  journal = {Phys. Rev. A},
  volume  = {55},
  number  = {3},
  pages   = {2290--2303},
  year    = {1997},
  doi     = {10.1103/PhysRevA.55.2290}
}

@article{Mazzola2009,
  author  = {Mazzola, L. and Maniscalco, S. and Piilo, J. and Suominen, K.-A. and Garraway, B. M.},
  title   = {Pseudomodes as an Effective Description of Memory: {Non-Markovian} Dynamics of Two-State Systems in Structured Reservoirs},
  journal = {Phys. Rev. A},
  volume  = {80},
  number  = {1},
  pages   = {012104},
  year    = {2009},
  doi     = {10.1103/PhysRevA.80.012104}
}

@article{Tamascelli2018,
  author  = {Tamascelli, D. and Smirne, A. and Huelga, S. F. and Plenio, M. B.},
  title   = {Nonperturbative Treatment of {Non-Markovian} Dynamics of Open Quantum Systems},
  journal = {Phys. Rev. Lett.},
  volume  = {120},
  number  = {3},
  pages   = {030402},
  year    = {2018},
  doi     = {10.1103/PhysRevLett.120.030402}
}

@article{Pleasance2020,
  author  = {Pleasance, Graeme and Garraway, B. M. and Petruccione, Francesco},
  title   = {Generalized Theory of Pseudomodes for Exact Descriptions of {Non-Markovian} Quantum Processes},
  journal = {Phys. Rev. Research},
  volume  = {2},
  number  = {4},
  pages   = {043058},
  year    = {2020},
  doi     = {10.1103/PhysRevResearch.2.043058}
}

@article{Trivedi2021,
  author  = {Trivedi, Rahul and Malz, Daniel and Cirac, J. Ignacio},
  title   = {Convergence Guarantees for Discrete Mode Approximations to {Non-Markovian} Quantum Baths},
  journal = {Phys. Rev. Lett.},
  volume  = {127},
  number  = {25},
  pages   = {250404},
  year    = {2021},
  doi     = {10.1103/PhysRevLett.127.250404}
}

@article{Baczewski2013,
  author  = {Baczewski, Andrew D. and Bond, Stephen D.},
  title   = {Numerical Integration of the Extended Variable Generalized {Langevin} Equation with a Positive {Prony} Representable Memory Kernel},
  journal = {J. Chem. Phys.},
  volume  = {139},
  number  = {4},
  pages   = {044107},
  year    = {2013},
  doi     = {10.1063/1.4815917}
}

@article{Mascherpa2020,
  author  = {Mascherpa, F. and Smirne, A. and Somoza, A. D. and Fern{\'a}ndez-Acebal, P. and Donadi, S. and Tamascelli, D. and Huelga, S. F. and Plenio, M. B.},
  title   = {Optimized Auxiliary Oscillators for the Simulation of General Open Quantum Systems},
  journal = {Phys. Rev. A},
  volume  = {101},
  number  = {5},
  pages   = {052108},
  year    = {2020},
  doi     = {10.1103/PhysRevA.101.052108}
}

@article{Dorda2014,
  author  = {Dorda, Antonius and Nuss, Martin and von der Linden, Wolfgang and Arrigoni, Enrico},
  title   = {Auxiliary Master Equation Approach to Nonequilibrium Correlated Impurities},
  journal = {Phys. Rev. B},
  volume  = {89},
  number  = {16},
  pages   = {165105},
  year    = {2014},
  doi     = {10.1103/PhysRevB.89.165105}
}

@article{Dorda2015,
  author  = {Dorda, Antonius and Ganahl, Martin and Evertz, Hans Gerd and von der Linden, Wolfgang and Arrigoni, Enrico},
  title   = {Auxiliary Master Equation Approach within Matrix Product States: Spectral Properties of the Nonequilibrium Anderson Impurity Model},
  journal = {Phys. Rev. B},
  volume  = {92},
  number  = {12},
  pages   = {125145},
  year    = {2015},
  doi     = {10.1103/PhysRevB.92.125145}
}

@article{Lednev2024,
  author  = {Lednev, M. and Garc{\'i}a-Vidal, F. J. and Feist, J.},
  title   = {Lindblad Master Equation Capable of Describing Hybrid Quantum Systems in the Ultrastrong Coupling Regime},
  journal = {Phys. Rev. Lett.},
  volume  = {132},
  number  = {10},
  pages   = {106902},
  year    = {2024},
  doi     = {10.1103/PhysRevLett.132.106902}
}

@article{Alford2026,
  author  = {Alford, Wynter and Bettmann, Laetitia P. and Landi, Gabriel T.},
  title   = {Subtleties in the Pseudomodes Formalism},
  journal = {Phys. Rev. B},
  volume  = {113},
  number  = {20},
  pages   = {205143},
  year    = {2026},
  doi     = {10.1103/79dp-vz29}
}

@article{Huang2026,
  author  = {Huang, Zhen and Park, Gunhee and Chan, Garnet Kin-Lic and Lin, Lin},
  title   = {Coupled {Lindblad} Pseudomode Theory for Simulating Open Quantum Systems},
  journal = {Phys. Rev. Lett.},
  volume  = {136},
  number  = {9},
  pages   = {090403},
  year    = {2026},
  doi     = {10.1103/qfkp-jc7s}
}

@article{Park2024,
  author  = {Park, Gunhee and Huang, Zhen and Zhu, Yuanran and Yang, Chao and Chan, Garnet Kin-Lic and Lin, Lin},
  title   = {Quasi-{Lindblad} Pseudomode Theory for Open Quantum Systems},
  journal = {Phys. Rev. B},
  volume  = {110},
  number  = {19},
  pages   = {195148},
  year    = {2024},
  doi     = {10.1103/PhysRevB.110.195148}
}

@article{Thoenniss2025,
  author  = {Thoenniss, Julian and Vilkoviskiy, Ilya and Abanin, Dmitry A.},
  title   = {Efficient Pseudomode Representation and Complexity of Quantum Impurity Models},
  journal = {Phys. Rev. B},
  volume  = {112},
  number  = {15},
  pages   = {155114},
  year    = {2025},
  doi     = {10.1103/h8g7-bmng}
}

@article{VilkoviskiyAbanin2024,
  author  = {Vilkoviskiy, Ilya and Abanin, Dmitry A.},
  title   = {Bound on Approximating {Non-Markovian} Dynamics by Tensor Networks in the Time Domain},
  journal = {Phys. Rev. B},
  volume  = {109},
  number  = {20},
  pages   = {205126},
  year    = {2024},
  doi     = {10.1103/PhysRevB.109.205126}
}

@misc{HuangDecomp2026,
  author        = {Huang, Zhen and Ding, Zhiyan and Wang, Ke and Kaye, Jason and Li, Xiantao and Lin, Lin},
  title         = {Provably Efficient Long-Time Exponential Decompositions of {Non-Markovian} {Gaussian} Baths},
  year          = {2026},
  eprint        = {2603.25708}
}

@article{Machlup1954,
  author  = {Machlup, Stefan},
  title   = {Noise in Semiconductors: Spectrum of a Two-Parameter Random Signal},
  journal = {J. Appl. Phys.},
  volume  = {25},
  number  = {3},
  pages   = {341--343},
  year    = {1954},
  doi     = {10.1063/1.1721637}
}

@article{SauveSzabo1985,
  author  = {Sauv{\'e}, R. and Szabo, G.},
  title   = {Interpretation of {$1/f$} Fluctuations in Ion Conducting Membranes},
  journal = {J. Theor. Biol.},
  volume  = {113},
  number  = {3},
  pages   = {501--516},
  year    = {1985},
  doi     = {10.1016/S0022-5193(85)80035-7}
}

@article{Pellegrini1986,
  author  = {Pellegrini, B. and Neri, B. and Saletti, R.},
  title   = {Minimum Number of Lorentzian Spectra Sufficient to Yield a {$1/f^{\gamma}$} Spectrum},
  journal = {Alta Frequenza},
  volume  = {55},
  number  = {4},
  pages   = {245--253},
  year    = {1986}
}

@article{Milotti1995,
  author  = {Milotti, E.},
  title   = {Linear Processes That Produce {$1/f$} or Flicker Noise},
  journal = {Phys. Rev. E},
  volume  = {51},
  number  = {4},
  pages   = {3087--3103},
  year    = {1995},
  doi     = {10.1103/PhysRevE.51.3087}
}

@article{Yoshihara2006,
  author  = {Yoshihara, Fumiki and Harrabi, Khalil and Niskanen, Antti O. and Nakamura, Yasunobu and Tsai, Jaw-Shen},
  title   = {Decoherence of Flux Qubits due to {$1/f$} Flux Noise},
  journal = {Phys. Rev. Lett.},
  volume  = {97},
  number  = {16},
  pages   = {167001},
  year    = {2006},
  doi     = {10.1103/PhysRevLett.97.167001}
}

@article{Bylander2011,
  author  = {Bylander, Jonas and Gustavsson, Simon and Yan, Fei and Yoshihara, Fumiki and Harrabi, Khalil and Fitch, George and Cory, David G. and Nakamura, Yasunobu and Tsai, Jaw-Shen and Oliver, William D.},
  title   = {Noise Spectroscopy through Dynamical Decoupling with a Superconducting Flux Qubit},
  journal = {Nat. Phys.},
  volume  = {7},
  number  = {7},
  pages   = {565--570},
  year    = {2011},
  doi     = {10.1038/nphys1994}
}

@book{Akhiezer1992,
  author    = {Akhiezer, N. I.},
  title     = {Theory of Approximation},
  publisher = {Dover},
  address   = {New York},
  year      = {1992}
}

@article{IstaceThiran1995,
  author  = {Istace, M.-P. and Thiran, J.-P.},
  title   = {On the Third and Fourth {Zolotarev} Problems in the Complex Plane},
  journal = {SIAM J. Numer. Anal.},
  volume  = {32},
  number  = {1},
  pages   = {249--259},
  year    = {1995},
  doi     = {10.1137/0732009}
}

@article{Gawlik2019,
  author  = {Gawlik, Evan S.},
  title   = {{Zolotarev} Iterations for the Matrix Square Root},
  journal = {SIAM J. Matrix Anal. Appl.},
  volume  = {40},
  number  = {2},
  pages   = {696--719},
  year    = {2019},
  doi     = {10.1137/18M1178529}
}

@article{TrefethenWilber2025,
  author  = {Trefethen, Lloyd N. and Wilber, Heather D.},
  title   = {Computation of {Zolotarev} Rational Functions},
  journal = {SIAM J. Sci. Comput.},
  volume  = {47},
  number  = {4},
  pages   = {A2205--A2220},
  year    = {2025},
  doi     = {10.1137/24M1687960}
}

@article{Lax1963,
  author  = {Lax, Melvin},
  title   = {Formal Theory of Quantum Fluctuations from a Driven State},
  journal = {Phys. Rev.},
  volume  = {129},
  number  = {5},
  pages   = {2342--2348},
  year    = {1963},
  doi     = {10.1103/PhysRev.129.2342}
}

@article{Cywinski2008,
  author  = {Cywi{\'n}ski, {\L}ukasz and Lutchyn, Roman M. and Nave, Cody P. and Das Sarma, S.},
  title   = {How to Enhance Dephasing Time in Superconducting Qubits},
  journal = {Phys. Rev. B},
  volume  = {77},
  number  = {17},
  pages   = {174509},
  year    = {2008},
  doi     = {10.1103/PhysRevB.77.174509}
}

@article{AlvarezSuter2011,
  author  = {{\'A}lvarez, Gonzalo A. and Suter, Dieter},
  title   = {Measuring the Spectrum of Colored Noise by Dynamical Decoupling},
  journal = {Phys. Rev. Lett.},
  volume  = {107},
  number  = {23},
  pages   = {230501},
  year    = {2011},
  doi     = {10.1103/PhysRevLett.107.230501}
}

@article{Cerfontaine2021,
  author  = {Cerfontaine, P. and Hangleiter, T. and Bluhm, H.},
  title   = {Filter Functions for Quantum Processes under Correlated Noise},
  journal = {Phys. Rev. Lett.},
  volume  = {127},
  number  = {17},
  pages   = {170403},
  year    = {2021},
  doi     = {10.1103/PhysRevLett.127.170403}
}

@article{Vieira2024,
  author  = {Vieira, Lucas B. and Milz, Simon and Vitagliano, Giuseppe and Budroni, Costantino},
  title   = {Witnessing Environment Dimension through Temporal Correlations},
  journal = {Quantum},
  volume  = {8},
  pages   = {1224},
  year    = {2024},
  doi     = {10.22331/q-2024-01-10-1224}
}

@article{StornPrice1997,
  author  = {Storn, Rainer and Price, Kenneth},
  title   = {Differential Evolution---A Simple and Efficient Heuristic for Global Optimization over Continuous Spaces},
  journal = {J. Global Optim.},
  volume  = {11},
  number  = {4},
  pages   = {341--359},
  year    = {1997},
  doi     = {10.1023/A:1008202821328}
}

@article{NelderMead1965,
  author  = {Nelder, J. A. and Mead, R.},
  title   = {A Simplex Method for Function Minimization},
  journal = {Comput. J.},
  volume  = {7},
  number  = {4},
  pages   = {308--313},
  year    = {1965},
  doi     = {10.1093/comjnl/7.4.308}
}
\onecolumngrid

\clearpage
\onecolumngrid
\setcounter{secnumdepth}{1}
\setcounter{section}{0}
\renewcommand{\thesection}{S\arabic{section}}
\renewcommand{\theHsection}{S\arabic{section}}
\setcounter{equation}{0}
\renewcommand{\theequation}{S\arabic{equation}}
\renewcommand{\theHequation}{S\arabic{equation}}
\setcounter{figure}{0}
\renewcommand{\thefigure}{S\arabic{figure}}
\renewcommand{\theHfigure}{S\arabic{figure}}
\setcounter{table}{0}
\renewcommand{\thetable}{S\arabic{table}}
\renewcommand{\theHtable}{S\arabic{table}}
\setcounter{proposition}{0}
\renewcommand{\theproposition}{S\arabic{proposition}}
\renewcommand{\theHproposition}{S\arabic{proposition}}
\setcounter{lemma}{0}
\renewcommand{\thelemma}{S\arabic{lemma}}
\renewcommand{\theHlemma}{S\arabic{lemma}}
\setcounter{corollary}{0}
\renewcommand{\thecorollary}{S\arabic{corollary}}
\renewcommand{\theHcorollary}{S\arabic{corollary}}

\begin{center}
{\Large\bfseries Supplemental Material}\\[0.35em]
{\large\bfseries \paperTitle}
\end{center}

\section{Physical model, spectral positivity, and restoration of physical units}\label{sec:SM-positivity}

The proof in this Supplemental Material proceeds through spectral positivity and the rational-degree bound, the exact isometry between sign-function and inverse-square-root approximations, physical realization with positive residues, and inversion of the exact error relation. We then present spectral probing with nonnegative weights, lower bounds on the auxiliary-mode count from spectral data, and numerical cross-checks.

We collect the annihilation operators of the $n$ auxiliary modes into the vector $\hat{\bm b}=(\hat b_1,\ldots,\hat b_n)^T$. Here, passivity means that the linear drift closes on $\hat{\bm b}$ and does not mix in $\hat{\bm b}^{\dagger}$. The number-conserving quadratic Hamiltonian and the Lindblad dissipator generated by linear annihilation operators are
\begin{align}
\hat H_A&=\hat{\bm b}^\dagger H\hat{\bm b},
\qquad H=H^\dagger,\label{eq:SM-H}\\
\mathcal D_A(\rho)&=
\sum_{j,k=1}^{n}2\Gamma_{jk}
\left(
\hat b_k\rho\hat b_j^\dagger
-\tfrac12\{\hat b_j^\dagger\hat b_k,\rho\}
\right),
\qquad \Gamma\succeq0 .
\label{eq:SM-D}
\end{align}
The system--auxiliary-network interaction is written as
$\hat H_{SA}=\hat S\otimes\hat B$, where $\hat S=\hat S^\dagger$, and the Hermitian bath operator seen by the system is
\begin{equation}
\hat B=\bm g^\dagger\hat{\bm b}
+\hat{\bm b}^\dagger\bm g\, .
\label{eq:SM-B}
\end{equation}
Here, $\bm g$ is the system--auxiliary-mode coupling vector. Because the system probes the network through only this single Hermitian bath operator, the spectrum seen by the system is scalar. This restriction applies only to the coupling interface through which the system probes the network; the internal Markovian dissipation of the auxiliary network is still described by an arbitrary $\Gamma\succeq0$. With the convention for the Lindblad coefficients used here, the homogeneous evolution of the first moments of the auxiliary modes obeys
$\dd\langle\hat{\bm b}\rangle/\dd t=-(\Gamma+iH)\langle\hat{\bm b}\rangle$. We therefore define the drift matrix $M=\Gamma+iH$. Here, $I_n$ denotes the identity matrix of order $n$, and $\sigma(M)$ denotes the set of eigenvalues of $M$. The stability condition is
$\min_{\lambda\in\sigma(M)}\Ree\lambda>0$; it ensures that $e^{-Mt}$ decays and excludes poles on the real-frequency axis.

\begin{proposition}[Spectral positivity and high-frequency decay]
\label{prop:positivity}
When the auxiliary network is in its stationary vacuum state, the spectrum in Eq.~\eqref{eq:model-spectrum} satisfies
\begin{equation}
J(\omega)\ge0,
\qquad
J(\omega)=\frac{2\bm g^\dagger\Gamma\bm g}{\omega^2}
+O(\omega^{-3}) .
\label{eq:SM-positive-asymptotic}
\end{equation}
\end{proposition}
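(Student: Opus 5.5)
We establish nonnegativity and the high-frequency asymptotics directly from the resolvent formula $J(\omega)=2\Ree[\bm g^\dagger(M-i\omega I_n)^{-1}\bm g]$ in Eq.~\eqref{eq:model-spectrum}. Stability makes $M-i\omega I_n$ invertible for every real $\omega$, because its eigenvalues $\lambda-i\omega$ have real part $\Ree\lambda>0$. The spectrum is therefore a well-defined real rational function on the whole real axis, and the two claims can be handled separately.

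\emph{Positivity.} Fix $\omega\in\mathbb R$ and set $\bm y=(M-i\omega I_n)^{-1}\bm g$, so that $\bm g=(M-i\omega I_n)\bm y$. Then
\begin{equation*}
\bm g^\dagger(M-i\omega I_n)^{-1}\bm g=\bm y^\dagger(M-i\omega I_n)^\dagger\bm y=\bm y^\dagger(\Gamma-iH+i\omega I_n)\bm y .
\end{equation*}
Since $\Gamma$ and $H$ are Hermitian, $\bm y^\dagger\Gamma\bm y$ is real and $\bm y^\dagger(-H+\omega I_n)\bm y$ is real, so the second term is purely imaginary. Taking real parts gives
\begin{equation*}
J(\omega)=2\bm y^\dagger\Gamma\bm y\ge0,
\end{equation*}
using $\Gamma\succeq0$. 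This argument does not require $M$ to be normal or diagonalizable.

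\emph{Asymptotics.} For $|\omega|>\|M\|$, expand the resolvent in a Neumann series:
\begin{equation*}
(M-i\omega I_n)^{-1}=-(i\omega)^{-1}\bigl(I_n-M/(i\omega)\bigr)^{-1}=\frac{i}{\omega}I_n+\frac{1}{\omega^2}M+O(\omega^{-3}).
\end{equation*}
Sandwiching with $\bm g$ gives $h(\omega)=i\|\bm g\|^2/\omega+\bm g^\dagger M\bm g/\omega^2+O(\omega^{-3})$. The first term is purely imaginary, and $\Ree(\bm g^\dagger M\bm g)=\bm g^\dagger\Gamma\bm g$. Hence
\begin{equation*}
J(\omega)=2\bm g^\dagger\Gamma\bm g/\omega^2+O(\omega^{-3}),
\end{equation*}
which is Eq.~\eqref{eq:SM-positive-asymptotic}.

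The only point needing care is identifying the stationary vacuum correlator with $\bm g^\dagger e^{-Mt}\bm g$ and its Fourier transform with the resolvent expression. Both identifications are already given in Eq.~\eqref{eq:model-spectrum} through the quantum regression theorem, and the stability condition guarantees that the integral converges. The remaining steps are routine.
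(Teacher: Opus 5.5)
Your proof is correct and follows the paper's argument essentially line for line. It uses the same substitution $\bm y_\omega=(M-i\omega I_n)^{-1}\bm g$ to turn the spectrum into $2\bm y_\omega^\dagger\Gamma\bm y_\omega$ (the paper phrases this step via $M+M^\dagger=2\Gamma$), and the same expansion $(M-i\omega I_n)^{-1}=iI_n/\omega+M/\omega^2+O(\omega^{-3})$; your explicit remarks on invertibility under stability and on the Neumann-series range $|\omega|>\|M\|$ are harmless additions.
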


\begin{proof}
The quantum regression theorem shows that the bath correlation function
$\langle\hat B(t)\hat B(0)\rangle_0$
follows the same homogeneous evolution as the single-time expectation values. Thus, in the stationary vacuum state,
$C(t)=\langle\hat B(t)\hat B(0)\rangle_0
=\bm g^\dagger e^{-Mt}\bm g$ for $t\ge0$~\cite{Lax1963}.
Stationarity and $\hat B=\hat B^\dagger$ give $C(-t)=C(t)^*$, so the two-sided spectrum is
\begin{equation}
J(\omega)=\int_{-\infty}^{\infty}C(t)e^{i\omega t}\dd t
=2\Ree\!\left[\bm g^\dagger(M-i\omega I_n)^{-1}\bm g\right].
\label{eq:SM-Fourier}
\end{equation}
Let $\bm y_\omega=(M-i\omega I_n)^{-1}\bm g$. Since
$M+M^\dagger=2\Gamma$,
\begin{equation}
J(\omega)=
\bm y_\omega^\dagger(M+M^\dagger)\bm y_\omega
=2\bm y_\omega^\dagger\Gamma\bm y_\omega\ge0 .
\label{eq:SM-positive-form}
\end{equation}
On the other hand,
\begin{equation}
(M-i\omega I_n)^{-1}
=\frac{iI_n}{\omega}+\frac{M}{\omega^2}+O(\omega^{-3}),
\end{equation}
and taking the real part yields Eq.~\eqref{eq:SM-positive-asymptotic}.
\end{proof}

If the original physical target is
$J_{\rm tar}^{\rm phys}(\nu)=\mathcal A_0/|\nu|$
on $\nu_{\min}\le|\nu|\le\nu_{\max}$, then
\begin{equation}
\omega=\frac{\nu}{\nu_{\min}},
\qquad
R=\frac{\nu_{\max}}{\nu_{\min}},
\qquad
M^{\rm phys}=\nu_{\min}M,
\qquad
\bm g^{\rm phys}=\sqrt{\mathcal A_0}\,\bm g .
\label{eq:SM-units}
\end{equation}
For independent auxiliary modes, physical units are restored according to $\gamma_j^{\rm phys}=\nu_{\min}\gamma_j$,
$|g_j^{\rm phys}|^2=\mathcal A_0|g_j|^2$, and the partial-fraction residues scale as
$a_j^{\rm phys}=\mathcal A_0\nu_{\min}a_j$.

\section{Rational-degree bound}\label{sec:SM-degree}

Define
\begin{equation}
\Rrat_{a,b}^{\mathbb R}(I)=
\left\{
\frac{u}{v}:\ 
\begin{gathered}
u,v\in\mathbb R[x],\ 
\deg u\le a,\ 
\deg v\le b,\\
v(x)\ne0,\quad \forall x\in I
\end{gathered}
\right\} .
\label{eq:SM-rational-class}
\end{equation}
We assume throughout that common factors in all rational functions have been canceled. For a polynomial or rational function with complex coefficients, define
\begin{equation}
w^\#(\lambda)=\overline{w(\bar\lambda)} .
\label{eq:SM-sharp}
\end{equation}

\begin{lemma}[Rational-degree bound]
\label{lem:degree}
Every stable $n$-mode network satisfies
\begin{equation}
J\in\Rrat_{2n-2,2n}^{\mathbb R}(\Omega_R),
\qquad
\omega J(\omega)\in\Rrat_{2n-1,2n}^{\mathbb R}(\Omega_R) .
\label{eq:SM-degree}
\end{equation}
This result does not require $M$ to be normal or diagonalizable. It therefore applies both to networks with nonnormal structure and to those containing Jordan blocks that generate higher-order poles.
\end{lemma}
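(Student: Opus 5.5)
The plan is to start from the resolvent representation $h(\omega)=\bm g^\dagger(M-i\omega I_n)^{-1}\bm g$ and write it via the adjugate formula (Cramer's rule). Set $v(\omega)=\det(M-i\omega I_n)$, a polynomial of exact degree $n$ with leading coefficient $(-i)^n$. Then $u(\omega)=\bm g^\dagger\operatorname{adj}(M-i\omega I_n)\bm g$ has degree at most $n-1$, since every entry of the adjugate is an $(n-1)\times(n-1)$ minor of a matrix whose entries are affine in $\omega$. This step uses only the existence of the inverse, never an eigendecomposition. This is why the lemma holds for nonnormal $M$ and for Jordan blocks: higher-order poles merely appear as repeated roots of $v$.

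Next I will write $J=h+\bar h$ on the real line. For real $\omega$, $\overline{h(\omega)}=h^\#(\omega)=u^\#(\omega)/v^\#(\omega)$ in the notation of Eq.~\eqref{eq:SM-sharp}. This gives
\begin{equation}
J(\omega)=\frac{u(\omega)v^\#(\omega)+u^\#(\omega)v(\omega)}{v(\omega)v^\#(\omega)} .
\end{equation}
The denominator $P=vv^\#$ has degree $2n$ and satisfies $P=P^\#$, so it has real coefficients. The numerator $Q=uv^\#+u^\#v$ also satisfies $Q^\#=Q$, so it is real, with a priori degree at most $2n-1$.

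The sharper bound $\deg Q\le 2n-2$ comes from Proposition~\ref{prop:positivity}: $J(\omega)=O(\omega^{-2})$ while $\deg P=2n$, so the $\omega^{2n-1}$ coefficient of $Q$ must vanish. Equivalently, this can be checked directly: the leading parts of $u/v$ and $u^\#/v^\#$ are $i\,\bm g^\dagger\bm g/\omega$ and its conjugate, which cancel. Cancelling common factors can only lower both degrees, so $J\in\Rrat^{\mathbb R}_{2n-2,2n}$.

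For the pole condition, I will use that the roots of $v$ are $\omega=-i\lambda$ with $\lambda\in\sigma(M)$ and $\Ree\lambda>0$. These lie strictly in the lower half-plane, and the roots of $v^\#$ are their conjugates in the upper half-plane. Hence $P$ has no real zeros, which gives no poles on $\Omega_R$, indeed none on $\mathbb R$. Multiplying by $\omega$ raises the numerator degree by one without changing the denominator, which yields $\omega J\in\Rrat^{\mathbb R}_{2n-1,2n}(\Omega_R)$.

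The only delicate point is the bookkeeping of the top coefficient of $Q$. I would handle it via the explicit expansion $(M-i\omega I_n)^{-1}=iI_n/\omega+O(\omega^{-2})$ already given in Proposition~\ref{prop:positivity}. I would also note that the degenerate case $\bm g=0$ gives $J\equiv0$, which trivially lies in the class.
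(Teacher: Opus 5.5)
Your proposal is correct and follows essentially the same route as the paper's proof: the adjugate representation $h=u/v$ with $\deg u\le n-1$ and $\deg v=n$, the $\#$-symmetric form of $J$ whose numerator and denominator have real coefficients, the $O(\omega^{-2})$ decay from Proposition~\ref{prop:positivity} to remove the $\omega^{2n-1}$ coefficient, and stability to exclude real poles. Your explicit placement of the zeros of $v$ and $v^\#$ in opposite half-planes, and your direct check that the leading terms $\pm i\,\bm g^\dagger\bm g/\omega$ cancel, spell out steps that the paper states more tersely.
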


\begin{proof}
Let
\begin{equation}
v(\lambda)=\det(M-i\lambda I_n),
\qquad
u(\lambda)=\bm g^\dagger\operatorname{adj}(M-i\lambda I_n)\bm g ,
\end{equation}
where $\operatorname{adj}(A)$ denotes the adjugate matrix of $A$. Then
\begin{equation}
h(\lambda)=\bm g^\dagger(M-i\lambda I_n)^{-1}\bm g
=\frac{u(\lambda)}{v(\lambda)},
\qquad
\deg u\le n-1,
\quad \deg v=n .
\label{eq:SM-resolvent-ratio}
\end{equation}
This derivation does not require $M$ to be diagonalizable. Stability ensures that $v(\omega)\ne0$ for every real $\omega$. On the real axis,
\begin{equation}
J(\omega)=
\frac{u(\omega)v^\#(\omega)+u^\#(\omega)v(\omega)}
{v(\omega)v^\#(\omega)} .
\label{eq:SM-sharp-spectrum}
\end{equation}
Both the numerator and denominator are invariant under $\#$ conjugation and therefore have real coefficients. Before cancellation, the denominator has degree at most $2n$. The decay $J(\omega)=O(\omega^{-2})$ established in Proposition~\ref{prop:positivity} forces the potentially present $\omega^{2n-1}$ term in the numerator to vanish, so the numerator has degree at most $2n-2$. Multiplication by $\omega$ then gives Eq.~\eqref{eq:SM-degree}.
\end{proof}

If a frequency-independent white-noise feedthrough term $d>0$ is included, the numerator degree of
$q(\omega)=\omega[d+J(\omega)]$ can be as large as $2n+1$.
The result then lies outside the degree class in Eq.~\eqref{eq:SM-degree}; throughout this work, we take $d=0$.

For any $q(\omega)=\omega J(\omega)$ generated by the networks above, define
$\widetilde q(\xi)=q(R\xi)$. By Lemma~\ref{lem:degree} and Eq.~\eqref{eq:sign-lift}, this scaling maps $\Omega_R$ to
$I_{R^{-1}}$, leaves the rational degrees unchanged, and satisfies
\begin{equation*}
\sup_{\omega\in\Omega_R}
|\sgnfun(\omega)-q(\omega)|
=
\sup_{\xi\in I_{R^{-1}}}
|\sgnfun(\xi)-\widetilde q(\xi)| .
\end{equation*}
All such $\widetilde q$ belong to
$\Rrat_{2N-1,2N}^{\mathbb R}(I_{R^{-1}})$, which is contained in the full
$\Rrat_{2N,2N}^{\mathbb R}(I_{R^{-1}})$ class used in the fourth Zolotarev problem. Taking the infimum over a larger admissible class can only yield a smaller or equal error; hence
\begin{equation}
\Ephys_N^{\rm coup}(R)\ge e_{2N}(R^{-1}) .
\label{eq:SM-lower-bound}
\end{equation}
Any auxiliary mode that, after cancellation, does not appear in the frequency response seen by the system only increases the state-space dimension; it cannot increase the numerator or denominator degree of $q(\omega)$ after cancellation and therefore does not affect the lower bound in Eq.~\eqref{eq:SM-lower-bound}.

\section{Exact isometry between sign-function and inverse-square-root approximations}\label{sec:SM-isometry}

\begin{lemma}[Isometry between sign-function and inverse-square-root approximations]
\label{lem:isometry}
For an integer $N\ge1$, $0<k<1$, and any
$r\in\Rrat_{N-1,N}^{\mathbb R}([k^2,1])$, let
$q_r(\xi)=\xi r(\xi^2)$. Then
\begin{equation}
\sup_{\xi\in[-1,-k]\cup[k,1]}
\left|\sgnfun(\xi)-q_r(\xi)\right|
=
\sup_{x\in[k^2,1]}
\left|1-\sqrt{x}\,r(x)\right| .
\label{eq:SM-sign-sqrt-isometry}
\end{equation}
\end{lemma}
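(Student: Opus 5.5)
The isometry follows from a pointwise identity followed by a change of variables $x=\xi^2$, which maps each of the two intervals bijectively onto $[k^2,1]$. Throughout, $r$ is taken to be pole-free on $[k^2,1]$, so $q_r(\xi)=\xi r(\xi^2)$ is real and finite at every $\xi\in I_k$. Both suprema are therefore over well-defined real-valued functions, and continuity of both sides makes the suprema maxima attained on compact sets.

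\emph{Positive interval.} For $\xi\in[k,1]$ we have $\sgnfun(\xi)=1$ and $\xi=\sqrt{\xi^2}$. Setting $x=\xi^2\in[k^2,1]$ gives
\begin{equation*}
|\sgnfun(\xi)-q_r(\xi)|=|1-\sqrt{x}\,r(x)| .
\end{equation*}

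\emph{Negative interval.} For $\xi\in[-1,-k]$ we have $\sgnfun(\xi)=-1$ and $\xi=-\sqrt{x}$ with $x=\xi^2$. Hence
\begin{equation*}
|\sgnfun(\xi)-q_r(\xi)|=|-1+\sqrt{x}\,r(x)|=|1-\sqrt{x}\,r(x)| .
\end{equation*}
This is where the oddness of $q_r$ is used: the error at $-\xi$ equals the error at $\xi$.

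\emph{Conclusion.} The map $\xi\mapsto\xi^2$ is a bijection from $[k,1]$ onto $[k^2,1]$, and also from $[-1,-k]$ onto $[k^2,1]$. So the set of values $\{|\sgnfun(\xi)-q_r(\xi)|:\xi\in I_k\}$ coincides exactly with $\{|1-\sqrt{x}\,r(x)|:x\in[k^2,1]\}$. Equal value sets have equal suprema, which is Eq.~\eqref{eq:SM-sign-sqrt-isometry}.

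There is no real obstacle here. The only point requiring care is to note that no assumption on the signs of $r$'s coefficients is needed. Degree bookkeeping is also unnecessary for this lemma: the observation that $q_r\in\Rrat_{2N-1,2N}^{\mathbb R}(I_k)$, since $\deg\xi u(\xi^2)\le 2N-1$ and $\deg v(\xi^2)\le 2N$, is only relevant downstream, when this lemma is combined with the odd Zolotarev approximant.
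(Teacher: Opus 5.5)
Your proof is correct and follows the paper's argument: the paper also writes $\xi=\sgnfun(\xi)|\xi|$ to get the pointwise identity $|\sgnfun(\xi)-\xi r(\xi^2)|=|1-|\xi|\,r(\xi^2)|$, then substitutes $x=\xi^2$, which maps both bands onto $[k^2,1]$. Your separate treatment of the positive and negative intervals, and your remark that the two value sets coincide, spell out the same step in more detail.
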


\begin{proof}
Since $\xi=\sgnfun(\xi)|\xi|$, pointwise
\begin{equation}
\left|\sgnfun(\xi)-\xi r(\xi^2)\right|
=
\left|1-|\xi|r(\xi^2)\right| .
\end{equation}
Setting $x=\xi^2$ maps both bands to $[k^2,1]$ and proves the result.
\end{proof}

\begin{lemma}[Even-degree Zolotarev type and error correspondence]
\label{lem:zolo-type}
For an integer $N\ge1$ and $0<k<1$, on the standard two-interval set
$I_k=[-1,-k]\cup[k,1]$,
the optimal error over the full class
$\Rrat_{2N,2N}^{\mathbb R}(I_k)$
can be attained by an odd function that, after cancellation, is of type
$(2N-1,2N)$. Moreover,
\begin{equation}
e_{2N}(k)=
\min_{r\in\Rrat_{N-1,N}^{\mathbb R}([k^2,1])}
\max_{x\in[k^2,1]}
\left|1-\sqrt{x}\,r(x)\right| .
\label{eq:SM-zolo-correspondence}
\end{equation}
\end{lemma}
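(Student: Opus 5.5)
The proof will establish that an optimal approximant can be taken odd, then argue about degrees, then apply Lemma~\ref{lem:isometry}.

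\emph{Existence and oddness.} Existence of a best approximation in $\Rrat_{2N,2N}^{\mathbb R}(I_k)$ follows from the classical Chebyshev theory for real rational approximation of a continuous function on a compact set; I would cite Akhiezer for this. The sign function is continuous on $I_k$ because $0\notin I_k$. If $s$ is a best approximant, then so is $\tilde s(\xi)=-s(-\xi)$, since $I_k$ is symmetric and $\sgnfun$ is odd. For real rational approximation on a finite union of intervals uniqueness is not automatic. I would therefore not symmetrize by averaging, because $(s+\tilde s)/2$ may have doubled degree. Instead I would invoke the explicit Zolotarev construction. Via the elliptic substitution, it yields a function $s^\ast(\xi)=\xi\,P(\xi^2)/Q(\xi^2)$ with $\deg P=N-1$ and $\deg Q=N$. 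This function equioscillates on $I_k$ at $4N+2$ points, which is the count $2\cdot 2N+2$ needed to certify optimality in $\Rrat_{2N,2N}$ by the alternation theorem (see Gawlik 2019). It is odd by construction and has type $(2N-1,2N)$ after cancellation. Its denominator $Q(\xi^2)$ has all its zeros on the imaginary axis, so it is nonvanishing on $I_k$.

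\emph{Main obstacle.} The delicate point is verifying the alternation count and that no defect occurs. An approximant whose exact type is lower than $(2N,2N)$ needs $2\cdot 2N+2-d$ alternation points, where $d$ is the defect. Here $s^\ast$ has numerator degree $2N-1$, so I must check that $4N+2$ alternations still suffice. The sufficiency half of the alternation theorem (de la Vallée Poussin) only requires $m+n+2=4N+2$ alternations, independent of the defect, so this goes through once the alternation points are counted from the elliptic parametrization. I expect this counting to be the main technical step, and I would take it from the literature rather than rederive it.

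\emph{Error correspondence.} Write $s^\ast(\xi)=\xi r^\ast(\xi^2)$ with $r^\ast=P/Q\in\Rrat_{N-1,N}^{\mathbb R}([k^2,1])$. Lemma~\ref{lem:isometry} gives $e_{2N}(k)=\max_{[k^2,1]}|1-\sqrt{x}\,r^\ast(x)|$, which is at least the minimum on the right side of the displayed identity.

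Conversely, take any $r\in\Rrat_{N-1,N}^{\mathbb R}([k^2,1])$. Then $q_r(\xi)=\xi r(\xi^2)$ has type at most $(2N-1,2N)$, has no poles on $I_k$, and so lies in $\Rrat_{2N,2N}^{\mathbb R}(I_k)$. By Lemma~\ref{lem:isometry} its error equals that of $r$, so every such $r$ has error at least $e_{2N}(k)$.

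The two inequalities give equality, and $r^\ast$ attains it, so the minimum is indeed a minimum.
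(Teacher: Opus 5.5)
Your overall route is the paper's: take Zolotarev's odd function $s_{2N-1,2N}$ from Gawlik, note its type $(2N-1,2N)$ and its poles on the imaginary axis, and transfer to $[k^2,1]$ via Lemma~\ref{lem:isometry}. Your two-inequality derivation of Eq.~\eqref{eq:SM-zolo-correspondence} is correct, and it is more explicit than the paper's.

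\textbf{The gap.} The problem is the one step you argue yourself: that $4N+2$ alternations certify optimality of $s^\ast$ in $\Rrat_{2N,2N}^{\mathbb R}(I_k)$ via the de la Vall\'ee Poussin half of the alternation theorem. On an interval this works because every admissible denominator keeps one sign there. On $I_k$ a competitor $r=p/q$ may have poles in the gap $(-k,k)$.

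Suppose $r-s^\ast$ alternates at the $4N+2$ points. Its numerator has degree $\le 4N$ and is forced to have only $2N$ zeros inside each interval, i.e.\ $4N$ in total, which its degree allows. The sign change across the gap can then come from an odd number of poles of $r$ in $(-k,k)$, and no contradiction follows.

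This is not hypothetical. Take $E=e_{2N}(k)$. The map $w\mapsto(1-E^2)/w$ sends $[1-E,1+E]$ and $[-1-E,-1+E]$ onto themselves. Hence $(1-E^2)/s^\ast$ lies in $\Rrat_{2N,2N}^{\mathbb R}(I_k)$, has a pole at $\xi=0$, and has error exactly $E$. Its difference with $s^\ast$ alternates in sign at all $4N+2$ points, yet it is not $s^\ast$. So the implication that de la Vall\'ee Poussin relies on (alternation of $r-s^\ast$ forces $r=s^\ast$) is false on $I_k$, and best approximants are not unique there. The defect discussion is moot in any case: $s^\ast$ has defect $0$ in the $(2N,2N)$ class.

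\textbf{How to close it.} Either cite Zolotarev's optimality theorem itself, as the paper does: Eq.~(27) of Gawlik's arXiv v1 states that $s_{2N-1,2N}$ attains the optimum over the full type-$(2N,2N)$ class. Or add a parity step:

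\begin{itemize}
\item Suppose $r=p/q$, in reduced form, has error $E'<E$. The count above shows that the numerator of $r-s^\ast$ has no zero in $[-k,k]$.
\item The endpoints $\pm k$ are consecutive alternation points, so $r-s^\ast$ changes sign between $-k$ and $k$. Since $s^\ast$ has no real poles, $q$ must change sign in $(-k,k)$.
\item Now $(1-E'^2)/r=(1-E'^2)q/p$ also lies in $\Rrat_{2N,2N}^{\mathbb R}(I_k)$, because $w\mapsto(1-E'^2)/w$ preserves $[1-E',1+E']$ and $[-1-E',-1+E']$. Its error is at most $E'$, and $p$ does not vanish on $I_k$.
\item The same count therefore shows that $p$ also changes sign in $(-k,k)$.
\item Then $r(k)=p(k)/q(k)$ and $r(-k)$ have the same sign, which contradicts $r(-k)<0<r(k)$.
\end{itemize}

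With either the citation or this parity step in place, the rest of your proof goes through.
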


\begin{proof}
We need only two facts from classical Zolotarev theory: the optimal error over the full type-$(2N,2N)$ class can be attained by an odd function that, after cancellation, is of type $(2N-1,2N)$; after factoring out $z$, the remaining rational function, viewed as a function of $z^2$, belongs precisely to the type-$(N-1,N)$ class needed here for inverse-square-root approximation.
Section 3.1 of the arXiv v1 version of Ref.~\cite{Gawlik2019}, immediately before its Eq.~(22), gives the corresponding odd function
$s_{2\ell+1,2m}(z;k)$
in the explicit product form
\begin{equation}
\begin{aligned}
s_{2\ell+1,2m}(z;k)
&=\mathscr M_{m,\ell}(k)z\,
\frac{\prod_{j=1}^{\ell}(z^2+c_{2j}(k))}
{\prod_{j=1}^{m}(z^2+c_{2j-1}(k))},\\
c_j(k)
&=k^2
\frac{\operatorname{sn}^2\!\left(jK(k')/(m+\ell+1);k'\right)}
{\operatorname{cn}^2\!\left(jK(k')/(m+\ell+1);k'\right)},
\qquad
k'=\sqrt{1-k^2}.
\end{aligned}
\label{eq:SM-Gawlik-factor}
\end{equation}
Here $\ell\in\{m-1,m\}$ and $j=1,\ldots,m+\ell$; an empty product in the numerator is defined to equal $1$. The constant $\mathscr M_{m,\ell}(k)>0$ is the unique normalization
constant for which the maximum positive deviation of
$s_{2\ell+1,2m}(z;k)-1$ equals the absolute value of its minimum
deviation on $z\in[k,1]$. The coefficients $c_j(k)$ in Eq.~\eqref{eq:SM-Gawlik-factor} are defined in Eq.~(22) of that arXiv version in terms of Jacobi elliptic functions; Eq.~(23) of the same version gives the alternation points. Setting $m=N$ and $\ell=N-1$, the function
$s_{2N-1,2N}(z;k)$
is, after cancellation, of type $(2N-1,2N)$; Eq.~(27) of that version further shows that it attains the optimal error over the full type-$(2N,2N)$ class. After the factor $z$ is extracted from Eq.~\eqref{eq:SM-Gawlik-factor}, the remaining rational function, as a function of $z^2$, has numerator and denominator degrees $N-1$ and $N$, respectively. Combining this observation with Lemma~\ref{lem:isometry} gives the required inverse-square-root form and establishes Eq.~\eqref{eq:SM-zolo-correspondence}.
\end{proof}

To avoid ambiguity between the standard and physical intervals, for a given $0<k<1$ let
$s_{2N}^{\rm std}(\xi;k)$ be an optimal rational approximant on the standard two-interval set
$I_k=[-1,-k]\cup[k,1]$ for the degree-$2N$ Zolotarev problem for sign approximation, satisfying $\sup_{\xi\in I_k}|\sgnfun(\xi)-s_{2N}^{\rm std}(\xi;k)|=e_{2N}(k)$.
Let
$r_N^{\rm std}(x;k)
\in\Rrat_{N-1,N}^{\mathbb R}([k^2,1])$
be the optimal rational approximant corresponding to $s_{2N}^{\rm std}(\xi;k)$ for the inverse-square-root minimax problem in relative error, with the normalization
\begin{equation}
s_{2N}^{\rm std}(\xi;k)
=\xi\,r_N^{\rm std}(\xi^2;k),
\qquad
\max_{x\in[k^2,1]}
|1-\sqrt{x}\,r_N^{\rm std}(x;k)|
=e_{2N}(k).
\label{eq:SM-standard-extremals}
\end{equation}
Set
\begin{equation}
k=R^{-1},
\qquad
\xi=\frac{\omega}{R},
\qquad
x_{\rm std}=\xi^2=\frac{x}{R^2},
\qquad x=\omega^2 .
\label{eq:SM-scaling}
\end{equation}
After scaling to the physical interval according to Eq.~\eqref{eq:SM-scaling}, we obtain
\begin{align}
q_{N,R}^{\rm opt}(\omega)
&=s_{2N}^{\rm std}(\omega/R;R^{-1})\notag\\
&=\omega r_{N,R}^{\rm opt}(\omega^2),
\label{eq:SM-scale-q}\\
r_{N,R}^{\rm opt}(x)
&=R^{-1}r_N^{\rm std}(x/R^2;R^{-1}) .
\label{eq:SM-scale-r}
\end{align}
Because $x=\omega^2$, both the positive- and negative-frequency bands map to
$[1,R^2]$. Equations~\eqref{eq:SM-scale-q} and
\eqref{eq:SM-scale-r} give the positive-interval form in Eq.~\eqref{eq:main-theorem} of the main text. In the next section, we further prove that the poles of
$r_{N,R}^{\rm opt}(x)$
lie on the negative real axis and that its residues are positive. Its partial-fraction expansion can therefore be realized directly by independent damped auxiliary modes at zero detuning; in the spectrum, each auxiliary mode contributes one positive Lorentzian component.

\section{Positive-residue construction and proof of the main theorem}\label{sec:SM-realization}

Let
\begin{equation}
k=R^{-1},
\qquad
k'=\sqrt{1-k^2},
\qquad
K'=K(k') .
\end{equation}
Set $m=N$ and $\ell=N-1$. After the rescaling in Eq.~\eqref{eq:SM-scaling}, the corresponding coefficients on the physical interval $x\in[1,R^2]$ are
\begin{equation}
c_j(N,R)=R^2c_j(R^{-1})=
\frac{\operatorname{sn}^2\!\left(j K'/(2N);k'\right)}
{1-\operatorname{sn}^2\!\left(j K'/(2N);k'\right)},
\qquad j=1,\ldots,2N-1 .
\label{eq:SM-jacobi-coefficients}
\end{equation}
We henceforth abbreviate $c_j\equiv c_j(N,R)$. Here,
$\operatorname{sn}(u;k')$
is a Jacobi elliptic function, and we adopt the modulus convention. These coefficients satisfy
\begin{equation}
0<c_1<c_2<\cdots<c_{2N-1} .
\label{eq:SM-coefficient-order}
\end{equation}
The odd-indexed coefficients $c_{2j-1}$ enter the denominator and give the poles
$-c_{2j-1}$, whereas the even-indexed coefficients $c_{2j}$ enter the numerator and give the zeros
$-c_{2j}$. This ordering gives strict pole--zero interlacing along the negative real axis.
For numerical implementations, software that takes the elliptic parameter rather than the modulus as input should be supplied with
$k'^2$.
Define
\begin{equation}
P_{N,R}(x)=
\frac{\prod_{j=1}^{N-1}(x+c_{2j})}
{\prod_{j=1}^{N}(x+c_{2j-1})},
\qquad
F_{N,R}(x)=\sqrt{x}\,P_{N,R}(x) .
\label{eq:SM-product}
\end{equation}
Let
\begin{equation}
F_-=
\min_{1\le x\le R^2}F_{N,R}(x),
\qquad
F_+=
\max_{1\le x\le R^2}F_{N,R}(x),
\end{equation}
and define the normalization factor and optimal error by
\begin{equation}
\chi_{N,R}=\frac{2}{F_-+F_+},
\qquad
\Ephys_N(R)=\frac{F_+-F_-}{F_++F_-} .
\label{eq:SM-normalization}
\end{equation}
Because $P_{N,R}(x)>0$ for every $x\in[1,R^2]$, we have
$F_->0$, $F_+>0$, and hence $\chi_{N,R}>0$.
Equation~\eqref{eq:SM-normalization} gives
$\chi_{N,R}F_-=1-\Ephys_N(R)$ and
$\chi_{N,R}F_+=1+\Ephys_N(R)$, so the maximum positive deviation equals
the absolute value of the minimum deviation.

The rescaling in Eq.~\eqref{eq:SM-scaling} also gives
$\chi_{N,R}=R\,\mathscr M_{N,N-1}(R^{-1})$, showing that
$\chi_{N,R}$ is the counterpart, in the physical variable, of the normalization constant on the standard interval. Specifically, substituting
$x_{\rm std}=x/R^2$
and
$c_j(R^{-1})=c_j(N,R)/R^2$
into the standard product shows that the ratio of the numerator scale factor to the denominator scale factor is $R^2$. Combining this ratio with the overall factor $R^{-1}$ in
Eq.~\eqref{eq:SM-scale-r}
yields the stated relation.

Applying $x=z^2$ and the rescaling in Eq.~\eqref{eq:SM-scaling} to the positive-interval alternation points given by Eq.~(23) of the arXiv v1 version of Ref.~\cite{Gawlik2019} yields $2N+1$ alternation points at which
$1-\sqrt{x}\,\chi_{N,R}P_{N,R}(x)$ equioscillates. For a type-$(N-1,N)$ rational approximation, these $2N+1$ extrema alternate in sign and have equal magnitude, satisfying the Chebyshev alternation criterion for rational approximation. Therefore,
$\chi_{N,R}P_{N,R}(x)$
is minimax optimal among rational functions of this type~\cite{Akhiezer1992,IstaceThiran1995}.

\begin{lemma}[Positive residues]
\label{lem:positive-residues}
The function
\begin{equation}
r_{N,R}^{\rm opt}(x)=\chi_{N,R}P_{N,R}(x)
\end{equation}
has the partial-fraction expansion
\begin{equation}
r_{N,R}^{\rm opt}(x)=
\sum_{j=1}^{N}\frac{a_j}{x+c_{2j-1}},
\qquad a_j>0 .
\label{eq:SM-partial-fraction}
\end{equation}
\end{lemma}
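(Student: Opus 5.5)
The plan is a direct residue computation, using only the product form \eqref{eq:SM-product} and the strict ordering \eqref{eq:SM-coefficient-order}. Since $\deg(\text{numerator})=N-1<N=\deg(\text{denominator})$ and, by \eqref{eq:SM-coefficient-order}, the poles $-c_1,-c_3,\ldots,-c_{2N-1}$ are distinct, $P_{N,R}$ has a proper partial-fraction expansion with simple poles and no polynomial part:
\begin{equation*}
P_{N,R}(x)=\sum_{j=1}^{N}\frac{\rho_j}{x+c_{2j-1}},\qquad
\rho_j=\frac{\prod_{i=1}^{N-1}(c_{2i}-c_{2j-1})}{\prod_{i\ne j}(c_{2i-1}-c_{2j-1})}.
\end{equation*}
Because $\chi_{N,R}>0$ by \eqref{eq:SM-normalization}, setting $a_j=\chi_{N,R}\rho_j$ reduces the lemma to showing $\rho_j>0$ for every $j$.

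The key step is a sign count that uses the interlacing. In the numerator of $\rho_j$, the factor $c_{2i}-c_{2j-1}$ is negative exactly when $2i<2j-1$, that is, for $i=1,\ldots,j-1$. This gives $j-1$ negative factors, and the remaining factors are positive. In the denominator, the factor $c_{2i-1}-c_{2j-1}$ (with $i\ne j$) is negative exactly when $i<j$, which again gives $j-1$ negative factors. Both signs therefore equal $(-1)^{j-1}$, so $\rho_j>0$. No factor vanishes, because the ordering in \eqref{eq:SM-coefficient-order} is strict. The case $N=1$ is trivial, since then $\rho_1=1$.

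The only real obstacle is the strict ordering \eqref{eq:SM-coefficient-order}, which the paper asserts. If it needs justification, I would argue as follows. The map $u\mapsto \operatorname{sn}^2(u;k')/(1-\operatorname{sn}^2(u;k'))=\operatorname{sn}^2/\operatorname{cn}^2$ is strictly increasing on $(0,K')$, because $\operatorname{sn}$ increases from $0$ to $1$ there and $t\mapsto t/(1-t)$ is increasing on $[0,1)$. The arguments $jK'/(2N)$ for $j=1,\ldots,2N-1$ are strictly increasing and lie in $(0,K')$, so the $c_j$ are positive, finite, and strictly increasing. This is exactly the required interlacing. Finally, the identification $\gamma_j^2=c_{2j-1}$ gives $\gamma_j>0$ and yields the form \eqref{eq:positive-stieltjes}.
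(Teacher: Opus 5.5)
Your proposal is correct and follows essentially the paper's proof. It uses the same residue formula at the simple poles $-c_{2j-1}$, the same count of $j-1$ negative factors in both numerator and denominator, and the positivity of $\chi_{N,R}$; your extra argument for the strict ordering, via monotonicity of $\operatorname{sn}(u;k')$ on $(0,K')$, fills in a step the paper only asserts.
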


\begin{proof}
By Eqs.~\eqref{eq:SM-product} and
\eqref{eq:SM-coefficient-order},
the denominator has $N$ distinct simple zeros
$-c_{2j-1}$. They strictly interlace with the numerator zeros, so no common factors cancel and the denominator zeros are simple poles.
Since the numerator and denominator degrees are $N-1$ and $N$, respectively, the partial-fraction expansion has no polynomial part, and its coefficients are the residues at the simple poles:
\begin{equation}
a_j=\chi_{N,R}
\frac{\prod_{u=1}^{N-1}(c_{2u}-c_{2j-1})}
{\prod_{\substack{v=1\\v\ne j}}^{N}(c_{2v-1}-c_{2j-1})} .
\label{eq:SM-residues}
\end{equation}
By Eq.~\eqref{eq:SM-coefficient-order}, the factors with $u<j$ in the numerator are negative, and there are $j-1$ of them; the factors with $v<j$ in the denominator are likewise negative, and there are also $j-1$ of them. Their signs cancel, and $\chi_{N,R}>0$; hence $a_j>0$.
\end{proof}

Set
\begin{equation}
\gamma_j=\sqrt{c_{2j-1}},
\qquad
|g_j|=\sqrt{\frac{a_j}{2\gamma_j}},
\qquad
H_{\rm opt}=0,
\qquad
\Gamma_{\rm opt}=\operatorname{diag}(\gamma_1,\ldots,\gamma_N) .
\label{eq:SM-physical-parameters}
\end{equation}
The corresponding spectrum is
\begin{equation}
J_{N,R}^{\rm opt}(\omega)
:=r_{N,R}^{\rm opt}(\omega^2)
=2\sum_{j=1}^{N}
\frac{|g_j|^2\gamma_j}{\omega^2+\gamma_j^2}.
\label{eq:SM-physical-spectrum}
\end{equation}
This construction is stable and passive and has a positive-semidefinite damping matrix; it therefore generates a completely positive and trace-preserving Lindblad evolution.

\begin{proof}[Proof of Theorem~\ref{thm:main}]
Lemma~\ref{lem:degree} gives
$\Ephys_N^{\rm coup}(R)\ge e_{2N}(R^{-1})$. Lemmas~\ref{lem:zolo-type} and~\ref{lem:positive-residues} provide an explicit physical construction in the uncoupled diagonal subclass whose error is exactly $e_{2N}(R^{-1})$; hence
$\Ephys_N^{\rm diag}(R)\le e_{2N}(R^{-1})$. Since
$\Pclass_{\le N}^{\rm diag}\subseteq\Pclass_{\le N}^{\rm coup}$, we also have
$\Ephys_N^{\rm coup}(R)\le\Ephys_N^{\rm diag}(R)$. Therefore,
$e_{2N}(R^{-1})\le\Ephys_N^{\rm coup}(R)\le\Ephys_N^{\rm diag}(R)\le e_{2N}(R^{-1})$, which proves the theorem.
\end{proof}

For $N=1$, the numerator in Eq.~\eqref{eq:SM-product} is an empty
product. Equation~\eqref{eq:SM-jacobi-coefficients} and the identity
$\operatorname{sn}^2(K'/2;k')=1/(1+k)$ give $c_1=R$. Hence
$P_{1,R}(x)=1/(x+R)$ and
$F_{1,R}(x)=\sqrt{x}/(x+R)$. The minimum of $F_{1,R}$ is
$F_-=1/(1+R)$ at the two endpoints, whereas its maximum is
$F_+=1/(2\sqrt R)$ at $x=R$.
Equations~\eqref{eq:SM-normalization} and
\eqref{eq:SM-physical-parameters} then give
$\gamma_1=\sqrt R$ and
$\Ephys_1(R)=[(\sqrt R-1)/(\sqrt R+1)]^2$.
As a concrete numerical example, Table~\ref{tab:R100N3} lists the
dimensionless optimal parameters for $R=100$ and $N=3$; the
corresponding error is $0.0285595879\ldots$.

\begin{table}[!ht]
\caption{Optimal parameters of the independent damped auxiliary modes at zero detuning for $R=100$ and $N=3$. Here, $a_j$ is a dimensionless residue; physical units are restored according to Eq.~\eqref{eq:SM-units} and the accompanying text.}
\label{tab:R100N3}
\begin{ruledtabular}
\begin{tabular}{cccc}
$j$ & $\gamma_j$ & $a_j$ & $|g_j|$\\
\hline
1 & $1.17307049$ & $1.95962983$ & $0.91392376$\\
2 & $10.00000000$ & $12.83907370$ & $0.80122012$\\
3 & $85.24636904$ & $142.40519095$ & $0.91392376$
\end{tabular}
\end{ruledtabular}
\end{table}

\section{Inverting the exact error relation for mode count and positive-frequency dynamic range}\label{sec:SM-resources}

To invert the exact error relation in the main text, we retain the function $\mu(k)$ defined in Eq.~\eqref{eq:mu-main}. This function is continuous and strictly decreasing on $(0,1)$ and satisfies
\begin{equation}
\mu(k)\to\infty\quad(k\downarrow0),
\qquad
\mu(k)\to0\quad(k\uparrow1) .
\label{eq:SM-mu-limits}
\end{equation}
The following Zolotarev modular equations therefore have unique solutions. The standard modular equation and error parametrization for the fourth Zolotarev problem are given in Refs.~\cite{Akhiezer1992,IstaceThiran1995,Gawlik2019}; in our notation, they yield the following corollary.

\begin{corollary}[Exact error and mode-count inversion]
\label{cor:SM-resource}
For any $R>1$ and integer $N\ge1$, let
$\zeta_{N,R}\in(0,1)$ be the unique solution of
\begin{equation}
2N\mu(\zeta_{N,R})=\mu(R^{-1}) .
\label{eq:SM-modular}
\end{equation}
Then
\begin{equation}
\Ephys_N(R)=\frac{1-\zeta_{N,R}}{1+\zeta_{N,R}} .
\label{eq:SM-exact-error}
\end{equation}
For $0<\varepsilon<1$, the condition
$\Ephys_N(R)\le\varepsilon$
is equivalent to $\zeta_{N,R}\ge\frac{1-\varepsilon}{1+\varepsilon}$. Because $\mu$ is strictly decreasing, this condition is further equivalent to $\mu(\zeta_{N,R})\le\mu\!\left(\frac{1-\varepsilon}{1+\varepsilon}\right)$. Together with Eq.~\eqref{eq:SM-modular}, this gives $N\ge\frac{\mu(R^{-1})}{2\mu((1-\varepsilon)/(1+\varepsilon))}$.
Taking the smallest integer $N$ that satisfies this condition yields Eq.~\eqref{eq:Nmin-main} of the main text.
If $\varepsilon=0$, no spectrum generated by a finite number of auxiliary modes can coincide exactly with the target spectrum $1/|\omega|$ over the entire two-sided frequency band; if $\varepsilon\ge1$, the zero spectrum is already feasible.
\end{corollary}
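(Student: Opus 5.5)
The plan is to derive the corollary from the Theorem, which gives $\Ephys_N(R)=e_{2N}(R^{-1})$, together with the classical closed form for the error of the fourth Zolotarev problem. After that, the inversion is elementary monotonicity bookkeeping.

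\emph{Step 1: existence and uniqueness of $\zeta_{N,R}$.} $\mu$ is continuous and strictly decreasing on $(0,1)$, with the limits in Eq.~\eqref{eq:SM-mu-limits}. Hence $\mu$ is a bijection $(0,1)\to(0,\infty)$. The target value $\mu(R^{-1})/(2N)$ is positive, so the modular equation \eqref{eq:SM-modular} has exactly one solution $\zeta_{N,R}\in(0,1)$.

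\emph{Step 2: exact error.} I would invoke the classical Zolotarev result (Akhiezer; Istace--Thiran; Gawlik, Sec.~3). For the degree-$m$ sign problem on $I_k$, the extremal $s_m$ is, up to normalization, a degree-$m$ elliptic transformation. This transformation maps $[k,1]$ onto an interval $[\zeta,1]$ (after scaling), where the moduli satisfy
\begin{equation*}
\frac{K(\sqrt{1-\zeta^2})}{K(\zeta)}=\frac{1}{m}\,\frac{K(\sqrt{1-k^2})}{K(k)},
\qquad\text{i.e. } m\,\mu(\zeta)=\mu(k).
\end{equation*}
The equioscillating normalization, with the midpoint of $[\zeta,1]$ sent to $1$, gives $e_m(k)=(1-\zeta)/(1+\zeta)$. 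Setting $m=2N$ and $k=R^{-1}$, and combining with the Theorem, yields Eq.~\eqref{eq:SM-exact-error}.

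The main obstacle is matching conventions: our $\mu$ uses a factor $\pi/2$ and moduli rather than parameters, while the literature's transformation uses nome or parameter conventions. I would settle the normalization by checking the case $N=1$, where the answer is independently known. For $m=2$ the transformation is the ascending Landen map, so $2\mu(\zeta)=\mu(k)$ gives $\zeta=2\sqrt{k}/(1+k)=2\sqrt R/(1+R)$. Then
\begin{equation*}
\frac{1-\zeta}{1+\zeta}=\frac{(\sqrt R-1)^2}{(\sqrt R+1)^2},
\end{equation*}
which matches the explicit $N=1$ computation in Sec.~\ref{sec:SM-realization}. This confirms the factor $2N$ and the form of the error map. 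As a cross-check at general $N$, one could compare with $(F_+-F_-)/(F_++F_-)$ from Eq.~\eqref{eq:SM-normalization}, evaluated at the alternation points.

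\emph{Step 3: inversion.} The map $\zeta\mapsto(1-\zeta)/(1+\zeta)$ is strictly decreasing from $(0,1)$ onto $(0,1)$. Hence, for $0<\varepsilon<1$,
\begin{equation*}
\Ephys_N(R)\le\varepsilon \iff \zeta_{N,R}\ge\frac{1-\varepsilon}{1+\varepsilon}.
\end{equation*}
Since $\mu$ is strictly decreasing, this is equivalent to $\mu(\zeta_{N,R})\le\mu\bigl((1-\varepsilon)/(1+\varepsilon)\bigr)$. Substituting $\mu(\zeta_{N,R})=\mu(R^{-1})/(2N)$ gives $N\ge \mu(R^{-1})/\bigl[2\mu((1-\varepsilon)/(1+\varepsilon))\bigr]$. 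The right-hand side is positive, so $N=0$ (with $\Ephys_0=1>\varepsilon$) is excluded automatically. The smallest admissible integer is therefore the ceiling in Eq.~\eqref{eq:Nmin-main}.

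\emph{Boundary cases.} For $\varepsilon\ge1$, $\bm g=0$ gives error exactly $1$, so $\Nmin=0$. For $\varepsilon=0$, Step 2 gives $\Ephys_N(R)>0$ for every $N$, because $\zeta_{N,R}<1$. This can also be seen directly: $\omega J=\sgnfun$ on $[1,R]$ would force the rational function $J$ to equal $1/\omega$ identically, contradicting the decay $J=O(\omega^{-2})$ from Proposition~\ref{prop:positivity}. Hence $\Nmin=+\infty$.
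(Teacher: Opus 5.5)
Your proposal is correct and follows essentially the same route as the paper. Like the paper, you import the exact error formula from the classical Zolotarev modular equation and error parametrization (Akhiezer, Istace--Thiran, Gawlik) via $\Ephys_N(R)=e_{2N}(R^{-1})$. You then obtain the mode-count formula from the strict monotonicity of $\zeta\mapsto(1-\zeta)/(1+\zeta)$ and of $\mu$, and you handle the boundary cases with the zero spectrum and the positivity of the error. Your Landen check at $N=1$, which reproduces $\Ephys_1(R)=[(\sqrt R-1)/(\sqrt R+1)]^2$, is a useful convention check the paper does not spell out; for general $N$, however, both you and the paper still rest on the cited classical result.
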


For fixed $0<\varepsilon<1$ and integer $N\ge1$, define
\begin{equation}
\tau_\varepsilon=\frac{1-\varepsilon}{1+\varepsilon} .
\end{equation}
If $k_{N,\varepsilon}\in(0,1)$ is the unique solution of
\begin{equation}
\mu(k_{N,\varepsilon})=2N\mu(\tau_\varepsilon),
\label{eq:SM-bandwidth-inverse}
\end{equation}
then
\begin{equation}
R_{\max}(N,\varepsilon)=k_{N,\varepsilon}^{-1},
\qquad
D_{\max}(N,\varepsilon)
=-\log_{10}k_{N,\varepsilon} .
\label{eq:SM-Dmax}
\end{equation}
The asymptotic formulas for the mode count and positive-frequency dynamic range follow from the standard expansions of the complete elliptic integrals~\cite{Akhiezer1992}. As $k\downarrow0$,
\[
K(k)=\frac{\pi}{2}+O(k^2),
\qquad
K(\sqrt{1-k^2})=\ln\frac{4}{k}
+O\!\left(k^2\ln\frac{1}{k}\right),
\]
and hence $\mu(k)=\ln\frac{4}{k}+O\!\left(k^2\ln\frac{1}{k}\right)$.
On the other hand, let
$\tau_\varepsilon'=\sqrt{1-\tau_\varepsilon^2}
=2\sqrt{\varepsilon}/(1+\varepsilon)$. As $\varepsilon\downarrow0$,
\[
K(\tau_\varepsilon')=\frac{\pi}{2}+O(\varepsilon),
\qquad
K(\tau_\varepsilon)=\frac12\ln\frac{4}{\varepsilon}
+O\!\left(\varepsilon\ln\frac{1}{\varepsilon}\right),
\]
so that $\mu(\tau_\varepsilon)=\frac{\pi^2}{2\ln(4/\varepsilon)}[1+o(1)]$.
Substituting these two expansions into Eq.~\eqref{eq:Nmin-main} of the main text, and noting that the ceiling contributes only an $O(1)$ correction, gives Eq.~\eqref{eq:Nmin-asymptotic} of the main text.

For fixed $0<\varepsilon<1$, $2N\mu(\tau_\varepsilon)$ grows linearly with $N$, so $k_{N,\varepsilon}\downarrow0$ as $N\to\infty$. The small-$k$ expansion above gives $k_{N,\varepsilon}=4e^{-2N\mu(\tau_\varepsilon)}[1+o(1)]$, and therefore
\begin{equation}
D_{\max}(N,\varepsilon)=
\frac{2\mu(\tau_\varepsilon)}{\ln10}N
-\log_{10}4+o(1) .
\label{eq:SM-Dmax-asymptotic}
\end{equation}
Taking $\varepsilon\downarrow0$ in turn, the expansion of $\mu(\tau_\varepsilon)$ above gives
\begin{equation}
\frac{2\mu(\tau_\varepsilon)}{\ln10}
=\frac{\pi^2}{\ln10\,\ln(4/\varepsilon)}[1+o(1)] .
\label{eq:SM-slope-small-error}
\end{equation}
Equation~\eqref{eq:SM-Dmax-asymptotic} is obtained by taking $N\to\infty$ at fixed $\varepsilon$, whereas Eq.~\eqref{eq:SM-slope-small-error} further gives the slope at small tolerance. Equation~\eqref{eq:Nmin-asymptotic} of the main text considers the joint asymptotic regime $R\to\infty$ and $\varepsilon\downarrow0$. These results are consistent but correspond to different limiting procedures.

\section{Spectral probing with nonnegative frequency weights and mode-count lower bounds from spectral data}\label{sec:SM-data}

\subsection{Dual characterization in terms of nonnegative frequency weights}

Filter functions in quantum decoherence and noise spectroscopy are physical examples of continuous nonnegative frequency weights~\cite{Cywinski2008,AlvarezSuter2011,Cerfontaine2021}. The following proposition gives the supremum, over all such weights, of the relative deviation of a weighted spectral quantity.

\begin{proposition}[Dual characterization in terms of nonnegative frequency weights]
\label{prop:positive-weight-duality}
Let $\mathcal C(\Omega_R)$ denote the space of continuous functions on $\Omega_R$. Suppose that $J_{\rm tar}\in\mathcal C(\Omega_R)$, $J_{\rm tar}>0$, and $J/J_{\rm tar}\in\mathcal C(\Omega_R)$. Define the class of nonnegative weights $\Wclass_R$ and the weighted spectral functional $\Phi_W[J]$ by
\begin{equation}
\Wclass_R=
\{W\in\mathcal C(\Omega_R):W\ge0,\ W\not\equiv0\},
\qquad
\Phi_W[J]=\frac{1}{2\pi}
\int_{\Omega_R}W(\omega)J(\omega)\dd\omega .
\end{equation}
Then
\begin{equation}
\sup_{W\in\Wclass_R}
\left|
\frac{\Phi_W[J]}{\Phi_W[J_{\rm tar}]}-1
\right|
=
\left\|\frac{J}{J_{\rm tar}}-1\right\|_{\infty,\Omega_R} .
\label{eq:SM-operational-duality}
\end{equation}
\end{proposition}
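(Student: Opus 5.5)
We prove the equality in Eq.~\eqref{eq:SM-operational-duality} by establishing the two inequalities separately. Set $\rho(\omega)=J(\omega)/J_{\rm tar}(\omega)-1$, which is continuous on the compact set $\Omega_R$ by hypothesis, and write $\|\rho\|_\infty$ for its sup norm. Given any $W\in\Wclass_R$, define the weight $\widetilde W=WJ_{\rm tar}$. Because $J_{\rm tar}>0$ and $W\ge0$ with $W\not\equiv0$, the function $\widetilde W$ is continuous, nonnegative, and not identically zero. By continuity it is strictly positive on some open subset of $\Omega_R$, so $\Phi_W[J_{\rm tar}]=\frac{1}{2\pi}\int\widetilde W\,\dd\omega>0$ and the ratio on the left-hand side is well defined. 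Writing $J=J_{\rm tar}(1+\rho)$ then gives the key identity
\begin{equation*}
\frac{\Phi_W[J]}{\Phi_W[J_{\rm tar}]}-1
=\frac{\int_{\Omega_R}\widetilde W\rho\,\dd\omega}{\int_{\Omega_R}\widetilde W\,\dd\omega}.
\end{equation*}
This expresses the relative deviation as a probability-weighted average of $\rho$.

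\emph{Upper bound.} The absolute value of a weighted average of $\rho$ is at most $\|\rho\|_\infty$, because $|\int\widetilde W\rho|\le\|\rho\|_\infty\int\widetilde W$. Taking the supremum over $W$ gives the inequality "$\le$".

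\emph{Lower bound.} Because $\Omega_R$ is compact, $|\rho|$ attains its maximum at some $\omega_0\in\Omega_R$; if $\omega_0$ is a band endpoint, we use a one-sided neighborhood. Fix $\delta>0$ and choose, by continuity, a relatively open neighborhood $U\ni\omega_0$ in $\Omega_R$ on which $|\rho(\omega)-\rho(\omega_0)|<\delta$. This neighborhood has positive measure because each component of $\Omega_R$ is a nondegenerate interval. Take $W=\phi/J_{\rm tar}$, where $\phi$ is a continuous bump with $0\le\phi\le1$, supported in $U$, and not identically zero. Then $W\in\Wclass_R$, because $1/J_{\rm tar}$ is continuous and positive. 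The weighted average of $\rho$ under $\widetilde W=\phi$ lies within $\delta$ of $\rho(\omega_0)$, so its absolute value is at least $\|\rho\|_\infty-\delta$. Letting $\delta\downarrow0$ gives "$\ge$". The supremum need not be attained, since the bound is approached only through a concentrating sequence of weights; this is why the statement is written with $\sup$ rather than $\max$.

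The only step needing care is the lower-bound construction. The candidate "optimal" weight would be a point mass at $\omega_0$, which is not continuous. The argument therefore has to approximate it by bumps and must make sure that the approximating weights stay nonnegative, continuous, and not identically zero after division by $J_{\rm tar}$. Positivity and continuity of $J_{\rm tar}$ on $\Omega_R$ guarantee all three properties. In the paper's setting, $J_{\rm tar}=1/|\omega|$ and $J$ come from stable networks with no poles on $\Omega_R$ (Lemma~\ref{lem:degree}), so the hypotheses hold automatically.
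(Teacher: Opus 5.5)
Your proposal is correct and follows the paper's proof: you rewrite the relative deviation as an average of $J/J_{\rm tar}-1$ against the probability measure proportional to $WJ_{\rm tar}\,\dd\omega$, bound that average by the sup norm, and approach the supremum with continuous bumps concentrating at a maximizer of $|J/J_{\rm tar}-1|$. Your extra checks, that $\Phi_W[J_{\rm tar}]>0$ and that $W=\phi/J_{\rm tar}$ remains in $\Wclass_R$, only spell out details the paper leaves implicit.
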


\begin{proof}
Set
\begin{equation}
r_{\rm rel}(\omega)=\frac{J(\omega)}{J_{\rm tar}(\omega)}-1,
\qquad
\dd\nu_W(\omega)=
\frac{W(\omega)J_{\rm tar}(\omega)\dd\omega}
{\int_{\Omega_R}WJ_{\rm tar}\dd\omega} .
\end{equation}
By the definitions above, $\Phi_W[J]/\Phi_W[J_{\rm tar}]-1=\int_{\Omega_R}r_{\rm rel}\,\dd\nu_W$, whose absolute value is bounded by $\|r_{\rm rel}\|_\infty$. Conversely, choosing a sequence of nonnegative continuous bump functions whose weights concentrate near a frequency at which $|r_{\rm rel}|$ attains its supremum makes the absolute values of the corresponding weighted averages approach $\|r_{\rm rel}\|_\infty$.
\end{proof}

\subsection{Spectral estimates and mode-count lower bounds within the specified model class}

Let $\widehat J(\omega)>0$ be a continuous spectral estimate over the entire two-sided frequency band. Define its uniform relative spectral mismatch with respect to an amplitude-adjustable reference spectrum $\mathcal A/|\omega|$ by
\begin{equation}
\delta_R[\widehat J]=
\inf_{\mathcal A>0}
\sup_{\omega\in\Omega_R}
\left|1-\frac{|\omega|\widehat J(\omega)}{\mathcal A}\right| .
\label{eq:SM-data-defect}
\end{equation}
If
\begin{equation}
Y_-=\inf_{\Omega_R}|\omega|\widehat J(\omega),
\qquad
Y_+=\sup_{\Omega_R}|\omega|\widehat J(\omega),
\end{equation}
then balancing the upper and lower extrema gives
\begin{equation}
\mathcal A_R^\star=\frac{Y_++Y_-}{2},
\qquad
\delta_R[\widehat J]=\frac{Y_+-Y_-}{Y_++Y_-} .
\label{eq:SM-data-defect-closed}
\end{equation}
Below we abbreviate $\delta_R:=\delta_R[\widehat J]$. If a candidate network with $n$ auxiliary modes satisfies
\begin{equation}
\eta=
\sup_{\omega\in\Omega_R}
\left|\frac{J(\omega)}{\widehat J(\omega)}-1\right|<1,
\end{equation}
then its maximum relative error with respect to $\mathcal A_R^\star/|\omega|$ over $\Omega_R$ is at most
\begin{equation}
\varepsilon_{\rm eff}
=\delta_R+\eta+\delta_R\eta .
\label{eq:SM-effective-error}
\end{equation}
After rescaling the coupling vector as
$\widetilde{\bm g}=\bm g/\sqrt{\mathcal A_R^\star}$,
we obtain
$\widetilde J=J/\mathcal A_R^\star$,
whereas $H$, $\Gamma$, and the number of auxiliary modes remain unchanged.
The main theorem therefore applies directly to the normalized target $1/|\omega|$.
When $0<\varepsilon_{\rm eff}<1$, the actual number of auxiliary modes in the candidate network satisfies
\begin{equation}
n\ge\Nmin(R,\varepsilon_{\rm eff}) .
\label{eq:SM-data-mode-bound}
\end{equation}
This bound concerns the minimum number of auxiliary modes within the specified model class, rather than the number of microscopic defects or the dimension of the environmental Hilbert space~\cite{Vieira2024}.

\subsection{Simultaneous confidence bands and multiband mode-count lower bounds}

Suppose that an experiment yields a simultaneous confidence band
\begin{equation}
0<J_-(\omega)\le J_{\rm true}(\omega)\le J_+(\omega).
\end{equation}
Define
\begin{align}
Y_-^{\rm band}&=\inf_{\Omega_R}|\omega|J_-(\omega),\\
Y_+^{\rm band}&=\sup_{\Omega_R}|\omega|J_+(\omega),\\
\mathcal A_R^{\rm band}&=\frac{Y_+^{\rm band}+Y_-^{\rm band}}{2},\\
\delta_R^{\rm band}&=
\frac{Y_+^{\rm band}-Y_-^{\rm band}}
{Y_+^{\rm band}+Y_-^{\rm band}} .
\end{align}
If the spectrum of a candidate network lies within this simultaneous confidence band at every $\omega\in\Omega_R$, then $Y_-^{\rm band}\le |\omega|J(\omega)\le Y_+^{\rm band}$.
Thus, the maximum relative error of the network with respect to the reference spectrum $\mathcal A_R^{\rm band}/|\omega|$ over the target band is no greater than $\delta_R^{\rm band}$, and we may set $\varepsilon_{\rm eff}=\delta_R^{\rm band}$.

Alternatively, if the true spectrum $J_{\rm true}$ lies within the simultaneous confidence band and the fitting error of a candidate network relative to $J_{\rm true}$ satisfies $\eta_{\rm true}:=\sup_{\omega\in\Omega_R}\left|\frac{J(\omega)}{J_{\rm true}(\omega)}-1\right|<1$, we may set $\varepsilon_{\rm eff}=\delta_R^{\rm band}+\eta_{\rm true}+\delta_R^{\rm band}\eta_{\rm true}$.

If the true spectrum exhibits approximate $1/|\omega|$ scaling only over certain subbands, choose a family of two-sided subbands
\begin{equation}
\mathscr B\subseteq
\left\{
[-\omega_b,-\omega_a]\cup[\omega_a,\omega_b]:
1\le\omega_a<\omega_b\le R
\right\}.
\end{equation}
For any
$\mathcal I=[-\omega_b,-\omega_a]\cup[\omega_a,\omega_b]
\in\mathscr B$,
define $R_{\mathcal I}:=\frac{\omega_b}{\omega_a}$. Under the rescaling $\widetilde\omega=\omega/\omega_a$, this subband is mapped to $[-R_{\mathcal I},-1]\cup[1,R_{\mathcal I}]$, and this frequency rescaling does not change the number of auxiliary modes in the candidate network. Restrict the preceding spectral mismatch and fitting error to $\mathcal I$, and define $\varepsilon_{{\rm eff},\mathcal I}$ in the same way. The actual number of auxiliary modes in the same candidate network must then satisfy
\begin{equation}
n\ge
\sup_{\mathcal I\in\mathscr B}
\Nmin\!\left(R_{\mathcal I},
\varepsilon_{{\rm eff},\mathcal I}\right) .
\label{eq:SM-multiscale-bound}
\end{equation}
If confidence bands are constructed separately for each subband, multiple comparisons must also be controlled; a simultaneous confidence band covering the entire frequency band can be restricted directly to each subband.

\subsection{Power-law exponent mismatch}

For a pure power-law candidate spectrum
$\widehat J_\alpha(\omega)=\mathcal A_\alpha/|\omega|^\alpha$,
where $\mathcal A_\alpha>0$ and $\alpha\in\mathbb R$, substituting it into Eq.~\eqref{eq:SM-data-defect} and minimizing over the reference amplitude $\mathcal A>0$ (equivalently, optimizing the overall amplitude ratio $\mathcal A_\alpha/\mathcal A$) gives
\begin{equation}
\delta_\alpha(R)=
\frac{R^{|1-\alpha|}-1}{R^{|1-\alpha|}+1}
=\tanh\!\left(\frac{|1-\alpha|\ln R}{2}\right) .
\label{eq:SM-power-mismatch}
\end{equation}
For a prescribed mismatch tolerance $0\le\delta<1$, the condition
$\delta_\alpha(R)\le\delta$ is equivalent to
\begin{equation}
|1-\alpha|\le
\frac{2\operatorname{artanh}\delta}{\ln R} .
\end{equation}
Figure~\ref{fig:SM-alpha} shows how this mismatch grows with the log-frequency span and the deviation of the exponent from unity.

\begin{figure}[!htbp]
\centering
\includegraphics[width=0.80\textwidth]{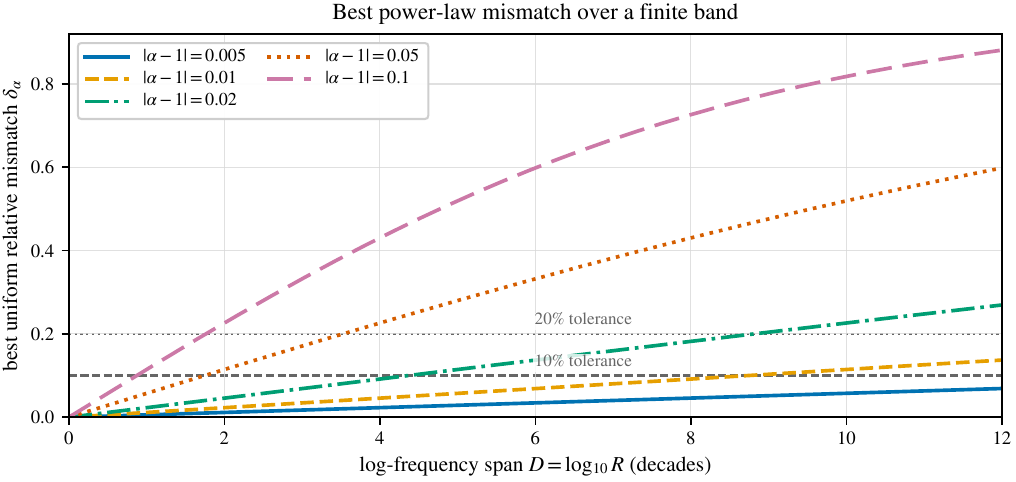}
\caption[Best power-law mismatch over a finite band]{Best uniform relative spectral mismatch between the candidate power-law scaling $1/|\omega|^\alpha$ and the reference scaling $1/|\omega|$ after optimization over the overall amplitude ratio. The exact expression is given in Eq.~\eqref{eq:SM-power-mismatch}.}
\label{fig:SM-alpha}
\end{figure}
\FloatBarrier

\section{Independent analytical and numerical cross-checks}\label{sec:SM-validation}

This section independently checks the elliptic moduli, rational degrees, scaling factors, and extrema over the continuous frequency band, as well as the numerical implementation of the corresponding formulas. The main tests comprise the Jacobi product formula and modular equation in $35$ cases, with agreement to at least $70$ significant digits; the minimum number of auxiliary modes in $84$ cases, with agreement in $84/84$ cases; and direct nonconvex optimization in $9$ cases, with a maximum relative discrepancy of $7.47\times10^{-5}$.

\subsection{High-precision numerical verification of the analytical formulas}

We take
\begin{equation}
R\in\{1.2,2,3,10,10^2,10^3,10^6\},
\qquad
N\in\{1,2,3,4,8\},
\end{equation}
giving $35$ test cases in total. The alternation points are given analytically by
\begin{equation}
x_j=\operatorname{dn}^{-2}
\!\left(\frac{jK'}{2N};k'\right),
\qquad j=0,\ldots,2N,
\label{eq:SM-alternation-points}
\end{equation}
where $\operatorname{dn}(u;k')$ is a Jacobi elliptic function, with $x_0=1$ and $x_{2N}=R^2$. The optimal errors obtained independently from the Jacobi product formula and the modular equation agree to at least $70$ significant digits; both the equioscillation relations and the vanishing of the derivatives at the interior alternation points are verified to the same precision. We also take $R=10^D$, $D=1,\ldots,12$, and
$\varepsilon\in\{0.2,0.1,0.05,0.02,0.01,10^{-3},10^{-4}\}$.
This gives $84$ test cases for the minimum number of auxiliary modes, for all of which the closed-form expression agrees with a sequential search over the mode count.

\subsection{Direct optimization without using the closed-form pole locations}

For independent damped auxiliary modes at zero detuning in the uncoupled diagonal subclass, the spectrum can be written as
\begin{equation}
J_N(\omega)=
\sum_{j=1}^{N}\frac{a_j}{\omega^2+\gamma_j^2},
\qquad a_j\ge0.
\end{equation}
Let $x=\omega^2$, and define the continuous error function used in the optimization by
\begin{equation}
\rho_N(x)=
1-\sqrt{x}\sum_{j=1}^{N}\frac{a_j}{x+\gamma_j^2},
\qquad x\in[1,R^2].
\label{eq:SM-rho}
\end{equation}
Its maximum absolute value on $[1,R^2]$ is exactly the maximum relative error of the corresponding network over the target band, which is the optimization objective in this section. For fixed pole locations, a linear program computes the minimax error level on the current exchange set; this value is a lower bound for the fixed-pole minimax problem over the full interval. Evaluating the resulting approximation over the entire interval and taking the supremum gives a continuous-error upper bound for the current candidate parameter set. The algorithm repeatedly finds all real roots of $\rho_N'(x)=0$ in $[1,R^2]$ and adds to the exchange set any extremum at which $|\rho_N(x)|$ exceeds the current minimax error level on the exchange set, until the gap between the upper and lower bounds on the continuous error is below $10^{-11}$. The outer optimization uses
$\ln\gamma_j\in[-\ln(20R),\ln(20R)]$ as its variables and employs the differential evolution algorithm~\cite{StornPrice1997}; the population size is $8N$, with at most $50$ generations for $N<3$ and at most $70$ generations for $N=3$, followed by up to $1400$ steps of local refinement using the Nelder--Mead algorithm~\cite{NelderMead1965}. For the nine problems with
$R\in\{3,10^2,10^3\}$ and $N=1,2,3$, the maximum relative discrepancy between the direct optimization and the exact values is $7.47\times10^{-5}$. The analytical proofs in Secs.~\ref{sec:SM-degree}--\ref{sec:SM-realization} establish the globally optimal values; the direct optimization here provides an independent numerical cross-check.

\end{document}